\renewclass\P{PTIME}
\renewclass\EXP{EXPTIME}
\renewclass\coNEXP{coNEXPTIME}
\tikzstyle{player1}=[draw,rounded rectangle, minimum size=7mm]
\tikzstyle{player2}=[draw,rectangle,minimum size=7mm]
\tikzstyle{widget}=[draw,ellipse,dashed,minimum size=6mm]
\tikzset{every loop/.style={looseness=7},every
  node/.style={thick}, every edge/.style={thick, draw}}
\newcommand\N{\ensuremath{\mathbb{N}}\xspace}
\renewcommand\R{\ensuremath{\mathbb{R}}\xspace}
\newcommand\Rplus{\ensuremath{\R_{\geq 0}}\xspace}
\newcommand\Rbar{\ensuremath{\R_\infty}\xspace}
\newcommand\Z{\ensuremath{\mathbb{Z}}\xspace}
\renewcommand\leq{\leqslant}
\renewcommand\geq{\geqslant}
\newcommand\powerset[1]{\ensuremath{2^{#1}}\xspace}
\newcommand\MinPl{\ensuremath{\mathsf{Min}}\xspace}
\newcommand\MaxPl{\ensuremath{\mathsf{Max}}\xspace}
\newcommand\game{\ensuremath{\mathcal G}\xspace}
\newcommand\Vertices{\ensuremath{V}\xspace}
\newcommand\VerticesMin{\ensuremath{\Vertices_{\MinPl}}\xspace}
\newcommand\VerticesMax{\ensuremath{\Vertices_{\MaxPl}}\xspace}
\newcommand\VerticesT{\ensuremath{\Vertices_t}\xspace}
\newcommand\vertex{\ensuremath{v}\xspace}
\newcommand\alphabet{\ensuremath{A}\xspace}
\newcommand\Edges{\ensuremath{E}\xspace}
\newcommand\edge{\ensuremath{e}\xspace}
\newcommand\weight{\ensuremath{\mathsf{Weight}}\xspace}
\newcommand\FinitePlays{\ensuremath{\mathsf{Plays}}\xspace}
\newcommand\FinitePlaysMin{\FinitePlays^\MinPl}
\newcommand\FinitePlaysMax{\FinitePlays^\MaxPl}
\newcommand\play{\ensuremath{\rho}\xspace}
\newcommand\strat{\ensuremath{\sigma}\xspace}
\newcommand\stratmin{\ensuremath{\strat_{\MinPl}}\xspace}
\newcommand\minstrategy{\stratmin}
\newcommand\stratmax{\ensuremath{\strat_{\MaxPl}}\xspace}
\newcommand\maxstrategy{\stratmax}
\newcommand\Play{\ensuremath{\mathsf{Play}}\xspace}
\newcommand\outcomes{\Play}
\newcommand\Val{\ensuremath{\mathsf{Val}}\xspace}
\newcommand\uVal{\ensuremath{\overline{\Val}}\xspace}
\newcommand\uppervalue{\uVal}
\newcommand\lVal{\ensuremath{\underline{\Val}}\xspace}
\newcommand\lowervalue{\lVal}
\newcommand\Clocks{\ensuremath{X}\xspace}
\newcommand\val{\ensuremath{\nu}\xspace}
\newcommand\valnull{\ensuremath{\mathbf{0}}\xspace}
\newcommand\Guards[1]{\ensuremath{G(#1)}\xspace}
\newcommand\States{\ensuremath{S}\xspace}
\newcommand\StatesMin{\ensuremath{\States_{\MinPl}}\xspace}
\newcommand\StatesMax{\ensuremath{\States_{\MaxPl}}\xspace}
\newcommand\StatesT{\ensuremath{\States_t}\xspace}
\newcommand\state{\ensuremath{s}\xspace}
\newcommand\Trans{\ensuremath{\Delta}\xspace}
\newcommand\trans{\ensuremath{\delta}\xspace}
\newcommand\rgame{\ensuremath{\mathcal{R}(\game)}\xspace} 
\newcommand\regions[2]{\mathsf{Reg}(#1,#2)}
\newcommand\hgame{\ensuremath{\mathcal S}(\game)\xspace}
\newcommand\rpath{\ensuremath{\pi}\xspace}
\newcommand\regv{\ensuremath{v}\xspace}
\newcommand\FOG{\ensuremath{\mathsf{FOG}}\xspace}
\newcommand\Ball{\ensuremath{\mathcal{B}_\infty}\xspace}
\newcommand\IteOpe{\ensuremath{\mathcal{F}}\xspace}
\renewcommand\paragraph[1]{\smallskip\noindent\textbf{#1.}}
\title{Optimal Reachability in Divergent Weighted Timed
    Games\thanks{The first author has been supported by ENS Cachan,
    Universit\'e Paris-Saclay. This work has been funded by the DeLTA
    project (ANR-16-CE40-0007), and by the SoSI project (PEPS SISC
    CNRS).}}
\author{Damien~Busatto-Gaston, Benjamin~Monmege and
  Pierre-Alain~Reynier} 
\institute{Aix Marseille Univ, LIF, CNRS, France\\
\email{\{damien.busatto,benjamin.monmege,pierre-alain.reynier\}@lif.univ-mrs.fr}}
\begin{document}

\pagestyle{plain}

\maketitle

\begin{abstract}
  Weighted timed games are played by two players on a timed automaton
  equipped with weights: one player wants to minimise the accumulated
  weight while reaching a target, while the other has an opposite
  objective. Used in a reactive synthesis perspective, this
  quantitative extension of timed games allows one to measure the
  quality of controllers. Weighted timed games are notoriously
  difficult and quickly undecidable, even when restricted to
  non-negative weights. Decidability results exist for subclasses of
  one-clock games, and for a subclass with non-negative weights
  defined by a semantical restriction on the weights of cycles. In
  this work, we introduce the class of \emph{divergent weighted timed
    games} as a generalisation of this semantical restriction to
  arbitrary weights. We show how to compute their optimal value,
  yielding the first decidable class of weighted timed games with
  negative weights and an arbitrary number of clocks. In addition, we
  prove that divergence can be decided in polynomial space. Last, we
  prove that for untimed games, this restriction yields a class of
  games for
  which the value can be computed in polynomial time.
\end{abstract}

\section{Introduction}

Developing programs that verify real-time specifications is
notoriously difficult, because such programs must take care of
delicate timing issues, and are difficult to debug a posteriori. One
research direction to ease the design of real-time software is to
automatise the process. We model the situation into a timed game,
played by a \emph{controller} and an antagonistic \emph{environment}:
they act, in a turn-based fashion, over a \emph{timed
  automaton}~\cite{AD94}, namely a finite automaton equipped with
real-valued variables, called clocks, evolving with a uniform rate. A
usual objective for the controller is to reach a target. We are thus
looking for a \emph{strategy} of the controller, that is a recipe
dictating how to play (timing delays and transitions to follow), so
that the target is reached no matter how the environment plays.
Reachability timed games are decidable~\cite{AsaMal99},
and \EXP-complete~\cite{JurTri07}.

If the controller has a winning strategy in a given reachability timed
game, several such winning strategies could exist. Weighted extensions
of these games have been considered in order to measure the quality of
the winning strategy for the controller~\cite{BCFL04,ABM04}. This
means that the game now takes place over a \emph{weighted (or priced)
  timed automaton}~\cite{BehFeh01,AluLa-04}, where transitions are
equipped with weights, and states with rates of weights (the cost is
then proportional to the time spent in this state, with the rate as
proportional coefficient). While solving weighted timed automata has
been shown to be \PSPACE-complete~\cite{BouBri07} (i.e.\ the same
complexity as the non-weighted version), weighted timed games are
known to be undecidable~\cite{BBR05}. This has led to many
restrictions in order to regain decidability, the first and most
interesting one being the class of strictly non-Zeno cost with only
non-negative weights (in transitions and states)~\cite{BCFL04,ABM04}:
this hypothesis states that every execution of the timed automaton
that follows a cycle of the region automaton has a weight far from 0
(in interval $[1,+\infty)$, for instance).

Less is known for weighted timed games in the presence of negative
weights in transitions and/or states. In particular, no results exist
so far for a class that does not restrict the number of clocks of the
timed automaton to 1. However, negative weights are particularly
interesting from a modelling perspective, for instance in case weights
represent the consumption level of a resource (money, energy\dots)
with the possibility to spend and gain some resource. In this work, we
introduce a generalisation of the strictly non-Zeno cost hypothesis in
the presence of negative weights, that we call \emph{divergence}. We
show the decidability of the class of divergent weighted timed games,
with a $2$-\EXP\ complexity (and an $\EXP$-hardness lower
bound). These complexity results match the ones that could be obtained
in the non-negative case from the study of~\cite{BCFL04,ABM04}.

Other types of payoffs than the accumulated weight we study (i.e.\
total payoff) have been considered for weighted timed games. For
instance, energy and mean-payoff timed games have been introduced
in~\cite{BreCas14}. They are also undecidable in
general. Interestingly, a subclass called \emph{robust timed games},
not far from our divergence hypothesis, admits decidability results. A
weighted timed game is robust if, to say short, every simple cycle
(cycle without repetition of a state) has weight non-negative or less
than a constant $-\varepsilon$. Solving robust timed game can be done
in \EXPSPACE, and is \EXP-hard.  Moreover, deciding if a weighted
timed game is robust has complexity $2$-\EXPSPACE\ (and
$\coNEXP$-hard). In contrast, we show that deciding the divergence of
a weighted timed game is a $\PSPACE$-complete
problem.\footnote{Whereas all divergent weighted game are robust, the
  converse may not be true, since it is possible to mix positive and
  negative simple cycles in an SCC.} In terms of modeling power, we do
believe that divergence is sufficient for most cases. It has to be
noted that extending our techniques and results in the case of robust
timed games is intrinsically not possible: indeed, the value problem
for this class is undecidable~\cite{BJM15}.

The property of divergence is also interesting in the absence of time.
Indeed, weighted games with reachability objectives have been recently
explored as a refinement of mean-payoff games~\cite{BGHM15,BGHM16}. A
pseudo-polynomial time (i.e.\ polynomial if weights are encoded in
unary) procedure has been proposed to solve them, and they are at
least as hard as mean-payoff games. In this article, we also study
divergent weighted games, and show that they are the first non-trivial
class of weighted games with negative weights solvable in polynomial
time. Table~\ref{tab:summary} summarises our results. We start in
Sections~\ref{sec:weighted-games} and
\ref{sec:solving-divergent-weighted-games} by studying weighted
(untimed) games, before considering the timed setting in
Sections~\ref{sec:weighted-timed-games} and
\ref{sec:solving-divergent-weighted-timed-games}.

\begin{table}[tbp]
  \centering
  \caption{Deciding weighted (timed) games with arbitrary weights}
  \label{tab:summary}
  \begin{tabular}{|c|c|c|c|}
    \hhline{~|*{3}{-|}}
    \multicolumn{1}{c|}{} & {\cellcolor{gray!20}}Value of a game 
    & \cellcolor{gray!20}Value of a divergent game 
    & \cellcolor{gray!20}Deciding the divergence \\
    \hline
    \cellcolor{gray!20}Untimed & pseudo-poly. \cite{BGHM16} &
                                                              \P-complete
    & \NL-complete (unary), \P (binary) \\\hline
    \cellcolor{gray!20}Timed & Undecidable \cite{BBR05} & 2-\EXP, \EXP-hard & \PSPACE-complete\\\hline
  \end{tabular}
\end{table}

\section{Weighted games}\label{sec:weighted-games}

We start our study with untimed games.
We consider two-player turn-based games played on weighted graphs and
denote the two \emph{players} by $\MaxPl$ and $\MinPl$. A
\emph{weighted game}\footnote{Weighted games are called \emph{min-cost
    reachability games} in \cite{BGHM16}.} is a tuple
$\game=\langle \Vertices=\VerticesMin\uplus
\VerticesMax,\VerticesT,\alphabet, \Edges,\weight\rangle$ where
$\Vertices$ are vertices, partitioned into vertices belonging to
\MinPl (\VerticesMin) and \MaxPl (\VerticesMax),
$\VerticesT\subseteq \VerticesMin$ is a subset of target vertices for
player $\MinPl$, $\alphabet$ is an alphabet,
$\Edges\subseteq \Vertices\times\alphabet\times \Vertices$ is a set of
directed edges, and $\weight\colon \Edges \to \Z$ is the weight
function, associating an integer weight with each edge. These games
need not be finite in general, but in
Sections~\ref{sec:weighted-games} and
\ref{sec:solving-divergent-weighted-games}, we limit our study to the
resolution of finite weighted games (where all previous sets are
finite). We suppose that:
\begin{inparaenum}[($i$)]
\item the game is deadlock-free, i.e.\ for each vertex
  $v\in \Vertices$, there is a letter $a\in A$ and a vertex
  $v'\in \Vertices$, such that $(v,a,v')\in \Edges$;
\item the game is deterministic, i.e.\ for each pair
  $(v,a)\in \Vertices\times \alphabet$, there is at most one vertex
  $v'\in\Vertices$ such that $(v,a,v')\in\Edges$.\footnote{Actions are
    not standardly considered, but they become useful in the
    timed setting.}
\end{inparaenum}

A \emph{finite play} is a finite sequence of edges
$\play=v_0\xrightarrow{a_0}v_1\xrightarrow{a_1}\cdots
\xrightarrow{a_{k-1}}v_k$, i.e.\ for all $0\leq i<k$,
$(v_i,a_i,v_{i+1})\in \Edges$. We denote by $|\play|$ the length $k$
of $\play$. We often write $v_0\xrightarrow{\play}v_k$ to denote that
\play is a finite play from $v_0$ to $v_k$. The play \play is said to
be a \emph{cycle} if $v_k=v_0$. We let $\FinitePlays_\game$ be the set
of all finite plays in $\game$, whereas $\FinitePlaysMin_\game$ and
$\FinitePlaysMax_\game$ denote the finite plays that end in a vertex
of $\MinPl$ and $\MaxPl$, respectively. A \emph{play} is then an
infinite sequence of consecutive edges.

A \emph{strategy} for $\MinPl$ (respectively, $\MaxPl$) is a mapping
$\strat\colon \FinitePlaysMin_\game \to \alphabet$ (respectively,
$\strat\colon \FinitePlaysMax_\game \to \alphabet$) such that for all
finite plays $\play\in\FinitePlaysMin_\game$ (respectively,
$\play\in\FinitePlaysMax_\game$) ending in vertex $v_k$, there exists
a vertex $v'\in\Vertices$ such that
$(v_k,\strat(\play),v')\in \Edges$. A play or finite play
$\play = v_0\xrightarrow{a_0}v_1\xrightarrow{a_1}\cdots$ conforms to a
strategy $\strat$ of $\MinPl$ (respectively, $\MaxPl$) if for all $k$
such that $v_k\in \VerticesMin$ (respectively, $v_k\in\VerticesMax$),
we have that $a_{k} = \strat(v_0\xrightarrow{a_0}v_1\cdots v_k)$. A
strategy $\strat$ is \emph{memoryless} if for all finite plays
$\play, \play'$ ending in the same vertex, we have that
$\strat(\play)=\strat(\play')$.  For all strategies $\minstrategy$ and
$\maxstrategy$ of players \MinPl and \MaxPl, respectively, and for all
vertices~$v$, we let $\outcomes_\game(v,\maxstrategy,\minstrategy)$ be
the outcome of $\maxstrategy$ and $\minstrategy$, defined as the
unique play conforming to $\maxstrategy$ and $\minstrategy$ and
starting in~$v$.

The objective of \MinPl is to reach a target vertex, while minimising
the accumulated weight up to the target. Hence, we associate to every
finite play
$\play=v_0\xrightarrow{a_0}v_1 \ldots\xrightarrow{a_{k-1}}v_k$ its
accumulated weight
$\weight_\game(\play)=\sum_{i=0}^{k-1}
\weight(v_i,a_i,v_{i+1})$. Then, the weight of an infinite play
$\play=v_0\xrightarrow{a_0}v_1\xrightarrow{a_1}\cdots$, also denoted
by $\weight_\game(\play)$, is defined by $+\infty$ if
$v_k\notin \VerticesT$ for all $k\geq 0$, or the weight of
$v_0\xrightarrow{a_0}v_1 \ldots\xrightarrow{a_{k-1}}v_k$ if $k$ is the
first index such that $v_k\in\VerticesT$. Then, we let
$\Val_\game(v,\minstrategy)$ and $\Val_\game(v,\maxstrategy)$ be the
respective values of the strategies:
\begin{align*}
\Val_\game(v,\minstrategy) &= \sup_{\maxstrategy}
\weight_\game(\outcomes(v,\maxstrategy,\minstrategy)) \\
\Val_\game(v,\maxstrategy) &= \inf_{\minstrategy}
\weight_\game(\outcomes(v,\maxstrategy,\minstrategy))\,.
\end{align*}
Finally, for all vertices~$v$, we let
$\lowervalue_\game(v) = \sup_{\maxstrategy}
\Val_\game(v,\maxstrategy)$ and
$\uppervalue_\game(v) = \inf_{\minstrategy}
\Val_\game(v,\minstrategy)$ be the \emph{lower} and \emph{upper
  values} of $v$, respectively.  We may easily show that
$\lowervalue_\game(v)\leq \uppervalue_\game(v)$ for all~$v$. We say
that strategies $\minstrategy^\star$ of $\MinPl$ and
$\maxstrategy^\star$ of $\MaxPl$ are optimal if, for all vertices~$v$,
$\Val_\game(v,\maxstrategy^\star)=\lowervalue_\game(v)$ and
$\Val_\game(v,\minstrategy^\star)=\uppervalue_\game(v)$, respectively.
We say that a game $\game$ is \emph{determined} if for all
vertices~$v$, its lower and upper values are equal. In that case, we
write $\Val_\game(v)=\lowervalue_\game(v)=\uppervalue_\game(v)$, and
refer to it as the \emph{value} of~$v$ in $\game$. Finite weighted
games are known to be determined~\cite{BGHM16}.  If the game is clear
from the context, we may drop the index $\game$ from all previous
notations.

\paragraph{Problems}
We want to compute the value of a \emph{finite} weighted game,
as well as optimal strategies for both players, if they exist. The
corresponding decision problem, called the \emph{value problem}, asks
whether $\Val_\game(v) \leq \alpha$, given a finite weighted game
$\game$, one of its vertices $v$, and a threshold
$\alpha\in\Z\cup\{-\infty,+\infty\}$.

\paragraph{Related work}
The value problem is a generalisation of the classical shortest path
problem in a weighted graph to the case of two-player games. If
weights of edges are all non-negative, a generalised Dijkstra
algorithm enables to solve it in polynomial time~\cite{KBB+08}. In the
presence of negative weights, a pseudo-polynomial-time (i.e.\
polynomial with respect to the game where weights are stored in unary)
solution has been given in~\cite{BGHM16}, based on a fixed point
computation with value iteration techniques. Moreover, the value
problem with threshold $-\infty$ is shown to be in $\NP\cap \coNP$,
and as hard as solving mean-payoff games.

\section{Solving divergent weighted games}
\label{sec:solving-divergent-weighted-games}

Our first contribution is to solve in polynomial time the value
problem, for a subclass of finite weighted games that we call
\emph{divergent}. To the best of our knowledge, this is the first
attempt to solve a non-trivial class of weighted games with arbitrary
weights in polynomial time. Moreover, the same core technique is used
for the decidability result in the timed setting that we will present
in the next sections. Let us first define the class of divergent
weighted games:

\begin{definition}
  A weighted game \game is divergent when every cycle \play of \game
  satisfies $\weight(\play)\neq 0$.
\end{definition}

Divergence is a property of the underlying weighted graph, independent
from the repartition of vertices between players. The term
\emph{divergent} reflects that cycling in the game ultimately makes
the accumulated weight grow in absolute value. We will first formalise
this intuition by analysing the strongly connected components (SCC) of
the graph structure of a divergent game (the repartition of vertices
into players does not matter for the SCC decomposition). Based on this
analysis, we will obtain the following results:
\begin{theorem}
  The value problem over finite divergent weighted games is a
  \P-complete problem. Moreover, deciding if a given finite weighted
  game is divergent is an \NL-complete problem when weights are
  encoded in unary, and \P\ when they are encoded in binary.
\end{theorem}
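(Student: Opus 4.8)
The plan is to first pin down the cycle structure of a divergent game through its SCC decomposition, and then to exploit it for both the value computation and the two decision procedures. The crux, and what I expect to be the main obstacle, is the following structural fact: in a divergent game every SCC of the underlying graph is \emph{sign-pure}, i.e.\ either all its cycles have positive weight or all have negative weight. The heart of this is a purely combinatorial lemma: a strongly connected weighted graph containing a cycle of positive weight and a cycle of negative weight necessarily contains a cycle of weight exactly~$0$. To prove it I would fix a positive cycle $C^+$ of weight $p>0$ through a vertex $u$, a negative cycle $C^-$ of weight $-r<0$ through $w$, and connecting paths $u\rightsquigarrow w$, $w\rightsquigarrow u$ whose concatenation $D$ (weight $s$) is a cycle through both. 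Any closed walk assembled from $a$ copies of $C^+$, $b$ copies of $C^-$ and $c\geq 1$ copies of $D$ has weight $ap-br+cs$; choosing $c=\gcd(p,r)/\gcd(s,\gcd(p,r))\geq 1$ makes $cs$ a multiple of $\gcd(p,r)$, so $ap-br=-cs$ is solvable in non-negative integers $a,b$ (translating along $(r,p)$ to clear the signs), which yields a zero cycle. Since a divergent game has no zero cycle at all, no SCC may mix signs nor contain a zero cycle: this is exactly sign-purity, and it also explains the footnote's claim that divergence is strictly stronger than robustness.

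Given sign-purity, I would compute the value by processing the SCC DAG in reverse topological order, so that when an SCC $S$ is treated the values of all downstream vertices are already known and serve as boundary conditions. First I would detect the infinite values in polynomial time: $\uppervalue(\vertex)=+\infty$ exactly when \MinPl cannot force reaching $\VerticesT$ (a standard attractor computation), and $\Val(\vertex)=-\infty$ exactly when \MinPl can force reaching a negative SCC, pump a \MinPl-controllable negative cycle arbitrarily often, and still reach $\VerticesT$ afterwards (again an attractor-style polynomial test). For the surviving finite values I would run the Bellman value-iteration operator $\IteOpe$ of~\cite{BGHM16} restricted to $S$. Sign-purity caps the number of iterations: in a positive SCC \MinPl never gains by closing a cycle, so a finite optimal value is realised without repeating a vertex and stabilises after $O(|\Vertices|)$ iterations, the non-stabilising vertices being exactly the $+\infty$ ones; a negative SCC is symmetric, \MaxPl now preventing the negative pumping that would produce $-\infty$, so that the surviving finite values again stabilise within $O(|\Vertices|)$ iterations. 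As the DAG has at most $|\Vertices|$ layers, value iteration stabilises after $\mathrm{poly}(|\Vertices|)$ steps, and since finite optimal plays can be taken acyclic their values stay bounded in absolute value by $|\Vertices|$ times the largest weight; hence every manipulated number has polynomially many bits and the whole computation runs in polynomial time. For \P-hardness I would reduce from reachability games, which are \P-complete: putting weight $+1$ on every edge makes all cycles positive and the game divergent, so that $\Val(\vertex)\leq|\Vertices|$ iff \MinPl can force reaching $\VerticesT$, making the threshold instance equivalent to the reachability game.

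Finally, deciding divergence amounts to deciding the \emph{absence} of a zero cycle. With binary weights I would use the trichotomy given by the lemma: an SCC is clean iff all its cycles are $>0$ or all are $<0$, so testing for a negative cycle (Bellman--Ford on $\game$) and for a positive cycle (Bellman--Ford on $-\game$), and, when none is found, computing the extremal cycle weight via all-pairs shortest/longest paths, decides cleanliness of each SCC in polynomial time. With unary weights the weight of any simple cycle is polynomially bounded, so I would decide \emph{non}-divergence in \NL: guess vertices $u,w$, verify by a length-$\leq|\Vertices|$ nondeterministic walk—tracking only the current vertex, a step counter and the running weight, all in logarithmic space—that $u$ carries a closed walk of weight $\geq 0$ and $w$ one of weight $\leq 0$, and check that $u$ and $w$ lie in a common SCC by two reachability tests; since $\NL$ is closed under complementation (Immerman--Szelepcs\'enyi), divergence itself is in \NL. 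For \NL-hardness I would reduce directed reachability: on top of a reachability instance $(s,t)$ with all edges of weight $+1$, add a gadget vertex $y$ carrying a weight-$(-1)$ self-loop together with edges $t\to y$ and $y\to s$ of weight $0$; a zero cycle then exists iff $s$ reaches $t$, so non-divergence is \NL-hard, and hence so is divergence.
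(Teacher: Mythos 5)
Your overall route coincides with the paper's: sign-purity of SCCs established by building a zero-weight cycle from a positive and a negative one (the paper's Proposition~\ref{prop:scc-sign} does this with a slightly simpler combination --- $q$ copies of the negative cycle against $p$ copies of the connecting cycle --- but your gcd/Bezout variant is equally valid), bottom-up processing of the SCC DAG with attractor-based detection of the infinite values, value iteration stabilising in linearly many rounds per SCC, the reachability-game reduction for \P-hardness, on-the-fly guessing of two short cycles plus Immerman--Szelepcs\'enyi for \NL, the $-1$-self-loop gadget for \NL-hardness, and shortest/longest cycle computations for the binary case.

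There is, however, one genuine gap: the negative SCCs in the value computation. You propose to run ``the value-iteration operator $\IteOpe$ of \cite{BGHM16} restricted to $S$'' --- i.e.\ the downward iteration from $+\infty$ computing the greatest fixed point --- and claim by symmetry that it stabilises within $O(|\Vertices|)$ iterations on a negative SCC. That claim is false. Take $\VerticesMax=\{v_1\}$, $\VerticesMin=\{v_2\}$, edges $v_1\to v_2$ of weight $-1$, $v_2\to v_1$ of weight $0$, $v_2\to t$ of weight $0$ and $v_1\to t$ of weight $-W$. The unique simple cycle has weight $-1$, so the game is divergent with a single negative SCC; no vertex has value $+\infty$ or $-\infty$, and both vertices have value $-W$. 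Yet the iterates from $+\infty$ are $(+\infty,0),\,(-1,0),\,(-1,-1),\,(-2,-1),\,(-2,-2),\dots$, losing one unit every two rounds and reaching the fixed point only after $\Theta(W)$ steps --- pseudo-polynomial, not $O(|\Vertices|)$. The symmetry you invoke is real, but it maps the positive-SCC statement to the \emph{dual} computation: swapping the players and negating the weights turns the greatest fixed point approached from $+\infty$ into the \emph{smallest} fixed point approached from $-\infty$. This is why the paper, on a negative SCC, first removes the $-\infty$ vertices and then restarts the iteration from $-\infty$ (Proposition~\ref{prop:negative-scc}), arguing via determinacy that this smallest fixed point is the value. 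Relatedly, your characterisation of the $-\infty$ vertices as those where \MinPl can ``pump a \MinPl-controllable negative cycle'' should be replaced by the attractor criterion of Proposition~\ref{prop:minus-infty}: a vertex has value $-\infty$ exactly when it lies outside the attractor of \MaxPl to the targets --- \MinPl need not control any particular cycle, only confine the play to the SCC, where every cycle is automatically negative. With these two repairs your argument goes through as in the paper.
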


\paragraph{SCC analysis}
A play \play in $\game$ is said to be positive (respectively,
negative) if $\weight(\play)>0$ (respectively, $\weight(\play)<0$).
It follows that a cycle in a divergent weighted game is either
positive or negative. A cycle is said to be simple if no vertices are
visited twice (except for the common vertex at the beginning and the
end of the cycle). We will rely on the following characterisation of
divergent games in terms of SCCs.

\begin{proposition}\label{prop:scc-sign}
  A weighted game \game is divergent if and only if, in each SCC
  of~\game, all simple cycles are either all positive, or all
  negative.
\end{proposition}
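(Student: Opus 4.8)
The plan is to prove the two implications separately, relying on the decomposition of closed walks into simple cycles for one direction and on an explicit construction of a zero-weight cycle for the other.

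For the direction from right to left, I would first observe that every cycle \play of \game lies entirely within a single SCC: since \play returns to its starting vertex, all the vertices it visits are pairwise mutually reachable. I would then invoke the standard fact that any cycle (closed walk) decomposes into simple cycles whose weights sum to $\weight(\play)$; concretely, one repeatedly extracts a simple cycle whenever a vertex is revisited, and each extracted simple cycle uses only vertices of \play, hence lies in the same SCC. Assuming all simple cycles of that SCC are positive (the negative case being symmetric), $\weight(\play)$ is a sum of strictly positive integers, so $\weight(\play)>0$. Thus every cycle has nonzero weight and \game is divergent.

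For the converse I would argue by contraposition: assuming some SCC contains simple cycles that are not all of the same sign, I build a cycle of weight exactly $0$. If that SCC already has a simple cycle of weight $0$ we are done immediately, so I may assume it contains a positive simple cycle $C_+$ at a vertex $u$, of weight $p>0$, and a negative simple cycle $C_-$ at a vertex $w$, of weight $-q$ with $q>0$. Since $u$ and $w$ belong to the same SCC, I pick a path $\alpha$ from $u$ to $w$ and a path $\beta$ from $w$ to $u$, and set $W=\weight(\alpha)+\weight(\beta)$. The key gadget is the closed walk $\gamma_b=\alpha\,(C_-)^b\,\beta$ based at $u$, of weight $W-bq$: choosing $b$ large enough that $W-bq<0$ turns it into a negative cycle of weight $-a$, where $a=bq-W>0$.

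It then remains to cancel the two signs. Concatenating at the common vertex $u$ exactly $a$ copies of $C_+$ and $p$ copies of $\gamma_b$ yields a single cycle of weight $a\,p+p\,(W-bq)=a\,p-p\,a=0$, contradicting divergence. The main obstacle is precisely this last step: the connecting paths $\alpha,\beta$ contribute an uncontrolled overhead $W$, so one cannot simply balance $C_+$ against $C_-$ directly. Wrapping $C_-$ sufficiently many times to absorb this overhead, and only then combining the resulting positive and negative cycles in inverse-proportional multiplicities, is what guarantees an \emph{exact} zero rather than merely a small weight.
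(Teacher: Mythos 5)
Your proof is correct and takes essentially the same approach as the paper: the right-to-left direction is the same decomposition of a cycle into simple cycles of a common SCC, and the left-to-right direction uses the paper's key trick of concatenating a positive and a negative cycle based at a common vertex in cross-multiplied multiplicities to produce an exact zero-weight cycle. The only cosmetic difference is that you absorb the weight $W$ of the connecting paths by wrapping $(C_-)^b$ inside $\gamma_b$ to force it negative, whereas the paper instead does a case analysis on the sign of the connecting cycle $\eta\eta'$ and pairs it with whichever of the two simple cycles has the opposite sign.
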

\begin{proof}
  Let us first suppose that \game is divergent. By contradiction,
  consider a negative simple cycle \play (of weight $-p<0$) and a
  positive simple cycle $\play'$ (of weight $p'>0$) in the same SCC.
  Let $v$ and $v'$ be respectively the first vertices of \play and
  $\play'$. By strong connectivity, there exists a finite play $\eta$
  from $v$ to $v'$ and a finite play $\eta'$ from $v'$ to $v$.  Let us
  consider the cycle $\play''$ obtained as the concatenation of $\eta$
  and $\eta'$. If $\play''$ has weight $q>0$, the cycle obtained by
  concatenating $q$ times \play and $p$ times $\play''$ has weight
  $0$, which contradicts the divergence of~$\game$. The same reasoning
  on $\play''$ and $\play'$ proves that $\play''$ can not be
  negative. Thus, $\play''$ is a cycle of weight $0$, which again
  contradicts the hypothesis.

  Reciprocally, consider a cycle of \game. It can be decomposed into
  simple cycles, all belonging to the same SCC. Therefore they are all
  positive or all negative. As the accumulated weight of the cycle is
  the sum of the weights of these simple cycles, \game is
  divergent.\qed
\end{proof}

\paragraph{Computing the values} 
Consider a divergent weighted game $\game$. Let us start by observing
that vertices with value $+\infty$ are those from which \MinPl can not
reach the target vertices: thus, they can be computed with the
classical attractor algorithm, and we can safely remove them, without
changing other values or optimal strategies. In the rest, we therefore
assume all values to be in $\Z\cup\{-\infty\}$.

Our computation of the values relies on a value iteration algorithm to
find the greatest fixed point of operator
$\IteOpe\colon (\Z\cup\{-\infty,+\infty\})^\Vertices \to
(\Z\cup\{-\infty,+\infty\})^\Vertices$, defined for every vector
$\vec x$ by $\IteOpe(\vec x)_v = 0$ if $v\in \VerticesT$, and otherwise
\[\IteOpe(\vec x)_v =
  \begin{cases}
    \displaystyle{\min_{e = (v, a, v')\in\Edges} \weight(e) + \vec x_{v'}} &
    \text{if }
    v\in \VerticesMin\\
    \displaystyle{\max_{e = (v, a, v')\in\Edges} \weight(e) + \vec x_{v'}} &
    \text{if } v\in \VerticesMax\,.
  \end{cases}
\]
Indeed, this greatest fixed point is known to be the vector of values
of the game (see, e.g., \cite[Corollary~11]{BGHM16}). In
\cite{BGHM16}, it is shown that, by initialising the iterative
evaluation of \IteOpe with the vector $\vec x^0$ mapping all vertices
to $+\infty$, the computation terminates after a number of iterations
pseudo-polynomial in $\game$ (i.e.\ polynomial in the number of
vertices and the greatest weight in $\game$). For $i>0$, we let
$\vec x^i = \IteOpe(\vec x^{i-1})$. Notice that the sequence
$(\vec x^i)_{i\in\N}$ is non-increasing, since \IteOpe is a monotonic
operator.
Value iteration algorithms usually benefit from decomposing a game
into SCCs (in polynomial time), considering them in a bottom-up
fashion: starting with target vertices that have value $0$, SCCs are
then considered in inverse topological order since the values of
vertices in an SCC only depend on values of vertices of greater SCCs
(in topological order), that have been previously computed.

\begin{example}
  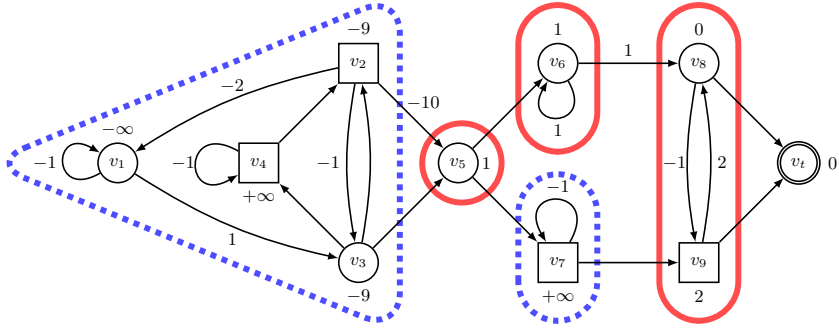
\begin{figure}[tbp]
    \centering \scalebox{.75}{\begin{tikzpicture}[node
        distance=2.5cm,auto,->,>=latex]

        \draw[dashed,draw=blue!70,rounded corners=20pt,line width=3pt]
        (-2.3,0) -- (5,3) -- (5,-3) -- cycle;
        \draw[dashed,draw=blue!70,rounded corners=20pt,line width=3pt]
        (7.1,-2.8) rectangle (8.5,-.2);
        \draw[draw=red!70,rounded corners=20pt,line width=3pt]
        (7.1,2.8) rectangle (8.5,.2);
        \draw[draw=red!70,rounded corners=20pt,line width=3pt]
        (9.6,-2.8) rectangle (11,2.8);
        \node[draw=red!70,rounded corners=20pt,line width=3pt,inner
        sep = 20pt] () at (6.1,0) {};

        \node[player1](1){\makebox[0mm][c]{$\vertex_1$}};
        \node[player2](4)[right of=1]{\makebox[0mm][c]{$\vertex_4$}};
        \node[player2](2)[above right of=4]{\makebox[0mm][c]{$\vertex_2$}}; 
        \node[player1](3)[below right of=4]{\makebox[0mm][c]{$\vertex_3$}};
        \node[player1](5)[below right of=2]{\makebox[0mm][c]{$\vertex_5$}};
        \node[player1](6)[above right of=5]{\makebox[0mm][c]{$\vertex_6$}};
        \node[player2](7)[below right of=5]{\makebox[0mm][c]{$\vertex_7$}}; 
        \node[player1](8)[right of=6]{\makebox[0mm][c]{$\vertex_8$}}; 
        \node[player2](9)[right of=7]{\makebox[0mm][c]{$\vertex_9$}};
        \node[player1](10)[accepting,below right of=8]{\makebox[0mm][c]{$\vertex_t$}};

        \node()[above of=1,node distance=6mm]{$-\infty$}; 
        \node()[above of=2,node distance=6mm]{$-9$}; 
        \node()[below of=3,node distance=6mm]{$-9$}; 
        \node()[below of=4,node distance=6mm]{$+\infty$};
        \node()[right of=5,node distance=5mm]{$1$}; 
        \node()[above of=6,node distance=6mm]{$1$}; 
        \node()[below of=7,node distance=6mm]{$+\infty$}; 
        \node()[above of=8,node distance=6mm]{$0$};
        \node()[below of=9,node distance=6mm]{$2$}; 
        \node()[right of=10,node distance=6mm]{$0$};

        \path (1) edge[in=150,out=210,loop]
        node[left]{$-1$} (1) 
        (1) edge[bend right=10] node[below]{$1$} (3)
        (2) edge[bend right=10] node[above]{$-2$} (1) 
        (2) edge[bend right=10] node[left]{$-1$} (3) 
        (2) edge node[above right,xshift=-2mm]{$-10$} (5) 
        (3) edge[bend right=10] (2)
        (3) edge (4) 
        (3) edge (5) 
        (4) edge[in=210,out=150,loop] node[left,xshift=1mm]{$-1$} (4) 
        (4) edge (2)
        (5) edge (6)
        (5) edge (7)
        (6) edge[in=-120,out=-60,loop] node[below]{$1$} (6)
        (6) edge node[above]{$1$} (8) 
        (7) edge[in=-240,out=-300,loop] node[above]{$-1$} (7)
        (7) edge (9) 
        (8) edge[bend right=10] node[left,xshift=1mm]{$-1$} (9)
        (8) edge (10) 
        (9) edge[bend right=10] node[right]{$2$} (8) 
        (9) edge (10);

\end{tikzpicture}}
\caption{SCC decomposition of a divergent weighted game:
  $\{v_1,v_2,v_3,v_4\}$ and $\{v_7\}$ are negative SCCs, $\{v_6\}$ and
  $\{v_8,v_9\}$ are positive SCCs, and $\{v_5\}$ is a trivial positive
  SCC.}
\label{fig:SCC}
\end{figure}
Consider the weighted game of \figurename~\ref{fig:SCC}, where \MinPl
vertices are drawn with circles, and \MaxPl vertices with
squares. Vertex $\vertex_t$ is the only target. Near each vertex is
placed its value. For a given vector $\vec x$, we have
$\IteOpe(\vec x)_{v_8} = \min(0+\vec x_{v_t},-1+\vec x_{v_9})$ and
$\IteOpe(\vec x)_{v_2} = \max(-2+\vec x_{v_1},-1+\vec x_{v_3},-10+\vec
x_{v_5})$. By a computation of the attractor of $\{v_t\}$ for \MinPl,
we obtain directly that $v_4$ and $v_7$ have value~$+\infty$. The
inverse topological order on SCCs prescribes then to compute first the
values for the SCC $\{v_8,v_9\}$, with target vertex $v_t$ associated
with value~$0$. Then, we continue with SCC $\{v_6\}$, also keeping a
new target vertex $v_8$ with (already computed) value $0$. For the
trivial SCC $\{v_5\}$, a single application of $\IteOpe$ suffices to
compute the value. Finally, for the SCC $\{v_1,v_2,v_3,v_4\}$, we keep
a new target vertex $v_5$ with value $1$.\footnote{This means that, in
  the definition of $\IteOpe$, a vertex $v$ of $\VerticesT$ is indeed
  mapped to its previously computed value, not necessarily $0$.}
Notice that this game is divergent, since, in each SCC, all simple
cycles have the same sign.
\end{example}

For a divergent game $\game$, Proposition~\ref{prop:scc-sign} allows
us to know in polynomial time if a given SCC is positive or negative,
i.e.\ if all cycles it contains are positive or negative,
respectively: it suffices to consider an arbitrary cycle of it, and
compute its weight. A trivial SCC (i.e.\ with a single vertex and no
edges) will be arbitrarily considered positive. We now explain how to
compute in polynomial time the value of all vertices in a positive or
negative SCC.

First, in case of a positive SCC, we show that: 
\begin{proposition}\label{prop:positive-scc}
  The value iteration algorithm applied on a positive SCC with $n$
  vertices stabilises after at most $n$ steps.
\end{proposition}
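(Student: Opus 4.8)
The plan is to show that the iterates $\vec x^i$ coincide with the optimal value vector $\vec x^\star$ of the SCC (the greatest fixed point of \IteOpe) as soon as $i \geq n$; since $\vec x^\star$ is a fixed point of \IteOpe, this immediately yields stabilisation after at most $n$ steps. Recall that the sequence $(\vec x^i)_i$ is non-increasing and converges to $\vec x^\star$, so $\vec x^n \geq \vec x^\star$ holds for free and only the reverse inequality $\vec x^n \leq \vec x^\star$ requires work. I would rely on the standard reading of $\vec x^i_v$ as the optimal value of the game in which \MinPl must reach a target within $i$ steps (paying $+\infty$ otherwise); it then suffices to exhibit, from every vertex, an optimal strategy of \MinPl that guarantees reaching a target within $n$ steps. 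This is exactly where positivity of the SCC enters.

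First I would check that, after removing the $+\infty$ vertices, all values in a positive SCC are finite. Any outcome either misses every target (weight $+\infty$) or reaches one through a finite play; decomposing that play into a simple path plus cycles, every removed cycle is positive by Proposition~\ref{prop:scc-sign}, so the weight of the outcome is bounded below by the minimum weight of a simple path plus the (finite) value of the exit vertex reached. As there are finitely many simple paths, this bound is finite, hence $\vec x^\star_v > -\infty$; together with $\vec x^\star_v < +\infty$ this gives $\vec x^\star_v \in \Z$.

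Next I would build from $\vec x^\star$ the memoryless strategy $\strat^\star$ of \MinPl selecting, at each $v \in \VerticesMin$, an edge $e=(v,a,v')$ realising $\vec x^\star_v = \weight(e)+\vec x^\star_{v'}$. A standard potential argument shows $\strat^\star$ is optimal: along any play conforming to it, the accumulated weight plus the current potential $\vec x^\star$ is non-increasing, with equality at \MinPl vertices and a possible decrease at \MaxPl vertices. The key step, and the main obstacle, is to prove that the subgraph $G_{\strat^\star}$ obtained by fixing \MinPl's moves to $\strat^\star$ while keeping all \MaxPl edges is acyclic on non-target vertices. If it contained a cycle, summing $\vec x^\star_v - \vec x^\star_{v'} \geq \weight(e)$ over its edges (equality at \MinPl vertices, inequality at \MaxPl vertices), the left-hand side telescopes to $0$, giving $0 \geq \weight(\text{cycle})$; this contradicts positivity, since such a cycle decomposes into positive simple cycles. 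Hence $G_{\strat^\star}$ is a directed acyclic graph on the at most $n$ non-target vertices of the SCC.

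Finally I would conclude. Since $G_{\strat^\star}$ is acyclic on these at most $n$ vertices, every play conforming to $\strat^\star$ reaches a target within at most $n$ steps, so $\weight_{\leq n}$ and \weight agree on all its outcomes. Therefore the $n$-step value satisfies $\vec x^n_v \leq \sup_{\maxstrategy} \weight(\outcomes(v,\maxstrategy,\strat^\star)) = \Val(v,\strat^\star) = \vec x^\star_v$ by optimality of $\strat^\star$. Combined with $\vec x^n \geq \vec x^\star$ this gives $\vec x^n = \vec x^\star$, proving stabilisation after at most $n$ steps. I expect the telescoping/acyclicity argument to be the delicate point: it is the only place where the positivity hypothesis is genuinely used, and where the interaction with \MaxPl's adversarial choices must be controlled.
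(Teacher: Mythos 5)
Your proof is correct, but it takes a genuinely different route from the paper's. The paper argues by contradiction on the iterates themselves: it fixes a large horizon $p>(2K+W(n-1))n$, extracts (via the play-unfolding Lemma~\ref{lem:valite}) a play of length $p$ and weight $\vec x^{n+p}_\vertex-\vec x^{n}_{\vertex'}$, decomposes it into many cycles plus a short residual path, and uses positivity of all those cycles to force $\vec x^{n+p}_\vertex$ above the a priori bound $K$ on $\vec x^n$ --- a purely quantitative argument that never mentions the limit $\vec x^\star$ or any strategy. You instead argue structurally: the greedy strategy extracted from the greatest fixed point induces, by a telescoping-potential argument whose only possible failure mode is a non-positive cycle, an acyclic subgraph on the non-target vertices of the SCC, hence all its outcomes reach a target within $n$ steps, hence the $n$-step value already coincides with $\vec x^\star$. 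Your version is cleaner, pinpoints exactly where positivity enters, and as a by-product exhibits a memoryless strategy of \MinPl that is optimal and terminates in $n$ steps; it does lean on the identification of $\vec x^i$ with the $i$-step-bounded value and of the greatest fixed point of \IteOpe with the value vector, both available from \cite{BGHM16} (and the one inequality you actually need, $\vec x^n_v \leq \Val(v,\strat^\star)$ for a memoryless $\strat^\star$ all of whose outcomes stop within $n$ steps, is a routine induction). What the paper's quantitative route buys is an almost verbatim transfer to the timed setting (Proposition~\ref{prop:VI-timed}), where the configuration space is infinite and a fixed-point-based memoryless strategy inducing a finite DAG is harder to set up: your acyclicity argument would have to be redone at the level of regions and piecewise-affine value functions, whereas the counting argument only needs the uniform bound $K$ and the region-level cycle decomposition.
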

\begin{proof}[inspired by techniques used in \cite{BCFL04}] Let
  $W=\max_{\edge\in\Edges} |\weight(\edge)|$ be the greatest weight in
  the game. There are no negative cycles in the SCC, thus there are no
  vertices with value $-\infty$ in the SCC, and all values are
  finite. Let $K$ be an upper bound on the values $|\vec x^n_\vertex|$
  obtained after $n$ steps of the algorithm.\footnote{After $n$ steps,
    the value iteration algorithm has set to a finite value all
    vertices, since it extends the attractor computation.} Fix an
  integer $p>(2K+W(n-1))n$. We will show that the values obtained
  after $n+p$ steps are identical to those obtained after $n$ steps
  only. Therefore, since the algorithm computes non-increasing
  sequences of values, we have indeed stabilised after $n$ steps only.
  Assume the existence of a vertex $\vertex$ such that
  $\vec x^{n+p}_\vertex<\vec x^{n}_\vertex$. By induction on $p$, we
  can show (see Lemma~\ref{lem:valite} in Appendix~\ref{app:technical}
  for a detailed proof) the existence of a vertex $\vertex'$ and a
  finite play \play from \vertex to $\vertex'$ with length $p$ and
  weight $\vec x^{n+p}_\vertex-\vec x^{n}_{\vertex'}$: the play is
  composed of the edges that optimise successively the min/max
  operator in \IteOpe. This finite play being of length greater than
  $(2K+W(n-1))n$, there is at least one vertex appearing more than
  $2K+W(n-1)$ times. Thus, it can be decomposed into at least
  $2K+W(n-1)$ cycles and a finite play $\play'$ visiting each vertex
  at most once. All cycles of the SCC being positive, the weight of
  \play is at least $2K+W(n-1) - (n-1) W= 2K$, bounding from below the
  weight of $\play'$ by $-(n-1)W$. Then,
  $\vec x^{n+p}_\vertex-\vec x^{n}_{\vertex'} \geq 2K$, so
  $\vec x^{n+p}_\vertex \geq 2K + \vec x^{n}_{\vertex'} \geq K$.  But
  $K \geq \vec x^{n}_{\vertex}$, so
  $\vec x^{n+p}_\vertex\geq\vec x^{n}_{\vertex}$, and that is a
  contradiction.\qed
\end{proof}

\begin{example}
  For the SCC $\{v_8,v_9\}$ of the game in \figurename~\ref{fig:SCC},
  starting from $\vec x$ mapping $v_8$ and $v_9$ to $+\infty$, and
  $v_t$ to $0$, after one iteration, $\vec x_{v_8}$ changes for value
  $0$, and after the second iteration, $\vec x_{v_9}$ stabilises to
  value $2$.
\end{example}

Consider then the case of a negative SCC. Contrary to the previous
case, we must deal with vertices of value $-\infty$. However, in a
negative SCC, those vertices are easy to find\footnote{This is in
  contrast with the general case of (non divergent) finite weighted
  games where the problem of deciding if a vertex has value $-\infty$
  is as hard as solving mean-payoff games~\cite{BGHM16}.}. These are
all vertices where $\MaxPl$ can not unilaterally guarantee to reach a
target vertex:

\begin{proposition}\label{prop:minus-infty}
  In a negative SCC with no vertices of value $+\infty$, vertices of
  value $-\infty$ are all the ones not in the attractor for $\MaxPl$
  to the targets.
\end{proposition}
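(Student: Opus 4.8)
The plan is to prove the two inclusions separately. Write $A$ for the attractor of $\MaxPl$ to $\VerticesT$ (restricted to the SCC) and $B=\Vertices\setminus A$ for its complement, let $n$ be the number of vertices of the SCC, and set $W=\max_{\edge\in\Edges}|\weight(\edge)|$. Recall from Proposition~\ref{prop:scc-sign} that, the SCC being negative, all its simple cycles are negative, hence of integer weight at most $-1$.

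First I would handle vertices in $A$. By the defining property of the attractor, $\MaxPl$ has a (memoryless) strategy forcing the play from any $v\in A$ into $\VerticesT$ within at most $n$ steps, whatever $\MinPl$ does. Any such outcome reaches the target, so its weight is finite and bounded below by $-nW$. Hence $\lVal(v)\geq -nW$, and by determinacy $\Val(v)\geq -nW>-\infty$: no vertex of $A$ has value $-\infty$.

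The converse, that every $v\in B$ has value $-\infty$, is the crux, and I would exploit two facts at once. On the one hand, $B$ is a trap for $\MaxPl$: every $\MaxPl$-vertex of $B$ has all its successors in $B$, and every $\MinPl$-vertex of $B$ has at least one successor in $B$ (otherwise it would already lie in $A$); thus $\MinPl$ has a strategy keeping the play inside $B$ forever. Since $B$ is finite, any walk of length $L$ inside $B$ decomposes into simple cycles (each of length at most $n$ and weight at most $-1$) and a simple path (length $<n$, weight at most $nW$), so its weight is at most $-(L-n)/n+nW$, which tends to $-\infty$ with $L$. On the other hand, since no vertex has value $+\infty$, every vertex of $B$ lies in the attractor of $\MinPl$ to $\VerticesT$, so from anywhere $\MinPl$ can force the target in at most $n$ further steps at additional weight at least $-nW$.

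Combining the two, to witness $\Val(v)\leq -N$ for an arbitrary $N$ I would have $\MinPl$ first play the trap strategy inside $B$ for a length $L$ large enough that the accumulated weight drops below $-(N+nW)$, and then switch to its attractor strategy towards $\VerticesT$; the resulting outcome reaches the target with total weight at most $-(N+nW)+nW=-N$. As this holds for every $N$ and against every $\MaxPl$ strategy, $\uVal(v)\leq -N$ for all $N$, whence $\Val(v)=-\infty$. The main obstacle is precisely this second direction: one must check that $\MinPl$ can keep the play in $B$ long enough to accumulate unboundedly negative weight and still steer it to the target afterwards. This is exactly where the two hypotheses interact: $v\notin A$ lets $\MinPl$ delay (cycle) indefinitely, the absence of $+\infty$-valued vertices lets it nonetheless reach the target at bounded extra cost, and negativity of all cycles turns arbitrary delay into arbitrarily negative value.
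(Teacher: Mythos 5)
Your proof is correct and follows essentially the same route as the paper's: \MaxPl's memoryless attractor strategy bounds the value from below on $A$, while on the complement \MinPl traps the play among negative cycles to drive the weight arbitrarily low before switching to a target-reaching strategy (available since no vertex has value $+\infty$). You merely make the quantitative bounds explicit; the one cosmetic slip is that the attractor phase should be said to cost \emph{at most} $nW$ rather than ``at least $-nW$'', which is in fact the bound you use in the final combination.
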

\begin{proof}
  Consider a vertex $v$ in the attractor for $\MaxPl$ to the
  targets. Then, if \MaxPl applies a winning memoryless strategy for
  the reachability objective to the target vertices, all strategies of
  \MinPl will generate a play from $v$ reaching a target after at most
  $|\Vertices|$ steps. This implies that $v$ has a finite (lower)
  value in the game.

  Reciprocally, if $v$ is not in the attractor, by determinacy of
  games with reachability objectives, \MinPl has a (memoryless)
  strategy $\minstrategy$ to ensure that no strategy of \MaxPl permits
  to reach a target vertex from $v$. Applying $\minstrategy$ long
  enough to generate many negative cycles, before switching to a
  strategy allowing \MinPl to reach the target (such a strategy exists
  since no vertex has value $+\infty$ in the game), allows \MinPl to
  obtain from $v$ a negative weight as small as possible. Thus, $v$
  has value~$-\infty$. \qed
\end{proof}

Thus, we can compute vertices of value $-\infty$ in polynomial time
for a negative SCC. Then, finite values of other vertices can be
computed in polynomial time with the following procedure. From a
negative SCC $\game$ that has no more vertices of value $+\infty$ or
$-\infty$, consider the dual (positive) SCC $\widetilde \game$ obtained by:
\begin{inparaenum}[($i$)]
\item switching vertices of $\MinPl$ and $\MaxPl$; 
\item taking the opposite of every weight in edges.
\end{inparaenum}
Sets of strategies of both players are exchanged in those two games,
so that the upper value in $\game$ is equal to the opposite of the
lower value in $\widetilde \game$, and vice versa. Since weighted
games are determined, the value of $\game$ is the opposite of the
value of $\widetilde\game$. Then, the value of $\game$ can be deduced
from the value of $\widetilde\game$, for which
Proposition~\ref{prop:positive-scc} applies. We may also interpret
this result as follows: 
\begin{proposition}\label{prop:negative-scc}
  The value iteration algorithm, initialised with $\vec x^0_v=-\infty$
  (for all $v$), applied on a negative SCC with $n$ vertices, and no
  vertices of value $+\infty$ or $-\infty$, stabilises after at most
  $n$ steps.
\end{proposition}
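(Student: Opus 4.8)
The plan is to exploit the duality between $\game$ and $\widetilde\game$ set up just above, reducing the statement to Proposition~\ref{prop:positive-scc}. Write $\IteOpe$ for the operator of the negative SCC $\game$ and $\widetilde{\mathcal F}$ for the operator of the dual positive SCC $\widetilde\game$, in which $\MinPl$ and $\MaxPl$ are switched, all weights are negated, and each target vertex keeps the opposite of its previously computed value. The key algebraic fact is an \emph{anti-conjugacy} between the two operators: for every vector $\vec x\in(\Z\cup\{-\infty,+\infty\})^\Vertices$,
\[ \IteOpe(\vec x) = -\,\widetilde{\mathcal F}(-\vec x)\,. \]
First I would establish this identity vertex by vertex. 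For $\vertex\in\VerticesMin$, which is a $\MaxPl$-vertex of $\widetilde\game$, negating the weights turns the $\min$ defining $\IteOpe(\vec x)_\vertex$ into a $\max$, and the outer global sign turns it back into a $\min$, so both sides coincide; the computation for $\vertex\in\VerticesMax$ is symmetric, and for $\vertex\in\VerticesT$ both sides return the stored value, using $-(-c)=c$ and the convention $-(+\infty)=-\infty$.

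Next I would transport the value iteration through this identity. Let $(\vec x^i)_{i\in\N}$ be the iteration on $\game$ initialised with $\vec x^0_\vertex=-\infty$ for every $\vertex$, and $(\vec y^i)_{i\in\N}$ the iteration on $\widetilde\game$ initialised with $\vec y^0_\vertex=+\infty$ for every $\vertex$. By the choice of initial vectors, $\vec y^0=-\vec x^0$, and the anti-conjugacy propagates this equality along the iteration: assuming $\vec y^i=-\vec x^i$, we obtain $\vec y^{i+1}=\widetilde{\mathcal F}(\vec y^i)=\widetilde{\mathcal F}(-\vec x^i)=-\IteOpe(\vec x^i)=-\vec x^{i+1}$. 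Hence $\vec x^i=-\vec y^i$ for all $i$.

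It then remains to apply Proposition~\ref{prop:positive-scc} to $\widetilde\game$. Since the value of $\game$ is the opposite of the value of $\widetilde\game$ and $\game$ has no vertex of value $+\infty$ or $-\infty$, the dual SCC $\widetilde\game$ is a positive SCC with only finite values, so the hypotheses of Proposition~\ref{prop:positive-scc} are met: the sequence $(\vec y^i)$ stabilises after at most $n$ steps, on the value of $\widetilde\game$. Through $\vec x^i=-\vec y^i$, the sequence $(\vec x^i)$ stabilises after at most $n$ steps as well, on $-\Val_{\widetilde\game}=\Val_\game$, which is exactly the claim.

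The point I would treat most carefully, and the main obstacle, is the bookkeeping of the $\pm\infty$ conventions: one must check that initialising with $-\infty$ on $\game$ matches exactly the $+\infty$ initialisation that Proposition~\ref{prop:positive-scc} requires on $\widetilde\game$, and that the now \emph{non-decreasing} sequence $(\vec x^i)$ converges to the value of $\game$ rather than merely to some fixed point below it. This last concern is settled precisely by the correspondence with $\widetilde\game$, since Proposition~\ref{prop:positive-scc} identifies the limit of $(\vec y^i)$ with the value of $\widetilde\game$, and determinacy gives $\Val_\game=-\Val_{\widetilde\game}$.
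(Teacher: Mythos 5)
Your proof is correct and follows exactly the paper's argument: the paper also reduces to Proposition~\ref{prop:positive-scc} via the dual positive SCC $\widetilde\game$, observing that the iterates of the $-\infty$-initialised iteration on $\game$ are the opposites of the $+\infty$-initialised iterates on $\widetilde\game$, and concluding by determinacy. You merely spell out the anti-conjugacy $\IteOpe(\vec x)=-\widetilde{\mathcal F}(-\vec x)$ and the $\pm\infty$ bookkeeping that the paper leaves as ``immediate''.
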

\begin{proof}
  It is immediate that the vectors computed with this modified value
  iteration (that computes the smallest fixed point of $\IteOpe$) are
  exactly the opposite vectors of the ones computed in the dual
  positive SCC. The previous explanation is then a justification of
  the result.\qed
\end{proof}

\begin{example}
  Consider the SCC $\{v_1,v_2,v_3,v_4\}$ of the game in
  \figurename~\ref{fig:SCC}, where the value of vertex $v_5$ has been
  previously computed. We already know that $v_4$ has value $+\infty$
  so we do not consider it further. The attractor of $\{v_5\}$ for
  \MaxPl is $\{v_2,v_3\}$, so that the value of $v_1$ is
  $-\infty$. Then, starting from $\vec x_0$ mapping $v_2$ and $v_3$ to
  $-\infty$, the value iteration algorithm computes this sequence of
  vectors: $\vec x_1 = (-9,-\infty)$ (\MaxPl tries to maximise the
  payoff, so he prefers to jump to the target to obtain $-10+1$ than
  going to $v_3$ where he gets $-1-\infty$, while \MinPl chooses $v_2$
  to still guarantee $0-\infty$), $\vec x_2 = (-9,-9)$ (now, \MinPl
  has a choice between the target giving $0+1$ or $v_3$ giving $0-9$).
\end{example}

The proof for \P-hardness comes from a reduction (in logarithmic
space) of the problem of solving finite games with reachability
objectives \cite{Imm81}. To a reachability game, we simply add weights
1 on every transition, making it a divergent weighted game. Then,
\MinPl wins the reachability game if and only if the value in the
weighted game is lower than $|\Vertices|$.

In a divergent weighted game where all values are finite, optimal
strategies exist. As observed in \cite{BGHM16}, \MaxPl always has a
memoryless optimal strategy, whereas \MinPl may require (finite)
memory. Optimal strategies for both players can be obtained by
combining optimal strategies in each SCC, the latter being obtained as
explained in \cite{BGHM16}. 

\paragraph{Class decision} We explain why deciding the divergence of a
weighted game is an $\NL$-complete problem when weighs are encoded in
unary. First, to prove the membership in \NL, notice that a weighted
game is \emph{not divergent} if and only if there is a positive cycle
and a negative cycle, both of length at most $|\Vertices|$, and
belonging to the same SCC.\footnote{If the game is not divergent,
  there exists an SCC containing a negative simple cycle and a
  positive one by Proposition~\ref{prop:scc-sign}. This implies the
  existence of a negative cycle and a positive cycle in the same SCC,
  both of length at most $|\Vertices|$. Reciprocally, this property
  implies the non-divergence, by the same proof as for
  Proposition~\ref{prop:scc-sign}.}  To test this property in \NL, we
first guess a starting vertex for both cycles. Verifying that those
are in the same SCC can be done in \NL. Then, we guess the two cycles
on-the-fly, keeping in memory their accumulated weights (smaller than
$W\times |\Vertices|$, with $W$ the biggest weight in the game, and
thus of size at most logarithmic in the size of $\game$, if weights
are encoded in unary), and stop the on-the-fly exploration when the
length of the cycles exceeds $|\Vertices|$. Therefore testing
divergence is in $\co\NL=\NL$ \cite{Imm88,Sze88}.

The $\NL$-hardness (indeed $\co\NL$-hardness, which is equivalent
\cite{Imm88,Sze88}) is shown by a reduction of the reachability
problem in a finite automaton. More precisely, we consider a finite
automaton with a starting state and a different target state without
outgoing transitions. We construct from it a weighted game by
distributing all states to \MinPl, and equipping all transitions with
weight $1$. We also add a loop with weight $-1$ on the target state
and a transition from the target state to the initial state with
weight $0$. Then, the game is not divergent if and only if the target
can be reached from the initial state in the automaton.

When weights are encoded in binary, the previous decision procedure
gives \NP\ membership. However, we can achieve a \P\ upperbound
with the following procedure. For every vertex~$\vertex$, let
$C_\vertex=\{\weight(\play) \mid \play\text{ cycle containing
}\vertex\}$. Using Floyd-Warshall's algorithm, it is possible to
compute in polynomial time $\inf C_\vertex$ (in particular, it detects
if $C_\vertex\neq\emptyset$), as well as $\sup C_\vertex$ in a dual
fashion. Then, Proposition~\ref{prop:scc-sign} allows us to guarantee
that a weighted game $\game$ is divergent if and only if
$0\not\in[\inf C_\vertex,\sup C_\vertex]$, for all vertices $\vertex$
such that $C_\vertex\neq\emptyset$.

\section{Weighted timed games}\label{sec:weighted-timed-games}

We now turn our attention to a timed extension of the weighted games. We
will first define weighted timed games, giving their semantics in
terms of \emph{infinite} weighted games.
We let \Clocks be a finite set of variables called clocks. A valuation of
clocks is a mapping $\val\colon \Clocks\to \Rplus$. For a valuation $\val$,
$d\in\Rplus$ and $Y\subseteq \Clocks$, we define the valuation $\val+d$ as
$(\val+d)(x)=\val(x)+d$, for all $x\in \Clocks$, and the valuation
$\val[Y\leftarrow 0]$ as $(\val[Y\leftarrow 0])(x)=0$ if $x\in Y$, and
$(\val[Y\leftarrow 0])(x)=\val(x)$ otherwise. The valuation $\valnull$
assigns $0$ to every clock.
A guard on clocks of \Clocks is a conjunction of atomic constraints of
the form $x\bowtie c$, where ${\bowtie}\in\{{\leq},<,=,>,{\geq}\}$ and
$c\in \N$. A valuation $\val\colon \Clocks\to \Rplus$ satisfies an
atomic constraint $x\bowtie c$ if $\val(x)\bowtie c$. The satisfaction
relation is extended to all guards $g$ naturally, and denoted by
$\val\models g$. We let $\Guards \Clocks$ the set of guards over
\Clocks.

A weighted timed game is then a tuple
$\game=\langle\States=\StatesMin\uplus\StatesMax,\StatesT,
\Trans,\weight\rangle$ where $\StatesMin$ and $\StatesMax$ are
\emph{finite} disjoint subsets of states belonging to \MinPl and
\MaxPl, respectively, $\StatesT\subseteq \StatesMin$ is a subset of
target states for player $\MinPl$,
$\Trans\subseteq \States\times\Guards \Clocks\times \powerset \Clocks
\times \States$ is a \emph{finite} set of transitions, and
$\weight\colon \Trans\uplus\States \to \Z$ is the weight function,
associating an integer weight with each transition and state. Without
loss of generality, we may suppose that for each state
$\state\in \States$ and valuation $\val$, there exists a transition
$(\state,g,Y,\state')\in \Trans$ such that $\val\models g$.

The semantics of a weighted timed game $\game$ is defined in terms of
the infinite weighted game $\mathcal H$ whose vertices are
configurations of the weighted timed game. A configuration is a pair
$(\state,\val)$ with a state and a valuation of the
clocks. Configurations are split into players according to the
state. A configuration is final if its state is final. The alphabet of
$\mathcal H$ is given by $\Rplus\times \Trans$ and will encode the
delay that a player wants to spend in the current state, before firing
a certain transition. For every delay $d\in\Rplus$, transition
$\trans=(\state,g,Y,\state')\in \Trans$ and valuation~$\val$, there is
an edge $(\state,\val)\xrightarrow{d,\trans}(\state',\val')$ if
$\val+d\models g$ and $\val'=(\val+d)[Y\leftarrow 0]$. The weight of
such an edge $e$ is given by
$d\times \weight(\state) + \weight(\trans)$.

Plays, strategies, and values in the weighted timed game $\game$ are
then defined as the ones in $\mathcal H$. It is known that weighted
timed games are determined
($\lowervalue_\game(s,\nu)=\uppervalue_\game(s,\nu)$ for all state $s$
and valuation $\nu$).\footnote{The result is stated in \cite{BGH+15}
  for weighted timed games (called priced timed games) with one clock,
  but the proof does not use the assumption on the number of clocks.}

As usual in related work \cite{ABM04,BCFL04,BJM15}, we assume that all
clocks are \emph{bounded}, i.e.\ there is a constant $M\in\N$ such
that every transition of the weighted timed games is equipped with a
guard $g$ such that $\val\models g$ implies $\val(x)\leq M$ for all
clocks $x\in \Clocks$.
We will rely on the crucial notion of regions, as introduced in the
seminal work on timed automata \cite{AD94}: a region is a set of
valuations, that are all time-abstract bisimilar. 
There is only a finite number of regions and we denote by
$\regions \Clocks M$ the set of regions associated with set of clocks
\Clocks and maximal constant $M$ in guards. For a valuation~$\val$, we
denote by $[\val]$ the region that contains it. A region $r'$ is said
to be a time successor of region $r$ if there exist $\val\in r$,
$\val'\in r'$, and $d>0$ such that $\val'=\val+d$. Moreover, for
$Y\subseteq \Clocks$, we let $r[Y\leftarrow 0]$ be the region where
clocks of $Y$ are reset.

The region automaton $\rgame$ of a game
$\game= \langle\States=\StatesMin\uplus\StatesMax,\StatesT,
\Trans,\weight\rangle$ is the finite automaton with states
$\States\times \regions \Clocks M$, alphabet $\Trans$, and a
transition $(s,r)\xrightarrow{\trans}(s',r')$ labelled by
$\trans=(s,g,Y,s')$ if there exists a region $r''$ time successor of
$r$ such that $r''$ satisfies the guard $g$, and
$r'=r''[Y\leftarrow 0]$. We call \emph{path} an execution (not
necessarily accepting) of this finite automaton, and we denote by
$\rpath$ the paths. A play $\play$ in $\game$ is projected on a
execution $\rpath$ in $\rgame$, by replacing actual valuations by the
regions containing them: we say that $\play$ \emph{follows} path
$\rpath$. It is important to notice that, even if $\rpath$ is a cycle
(i.e.\ starts and ends in the same state of the region automaton),
there may exist plays following it in $\game$ that are not cycles, due
to the fact that regions are sets of valuations.

\paragraph{Problems} As in weighted (untimed) games, we consider the
\emph{value problem}, mimicked from the one in $\mathcal
H$. Precisely, given a weighted timed game $\game$, a configuration
$(s,\val)$ and a threshold $\alpha\in \Z\cup\{-\infty,+\infty\}$, we
want to know whether $\Val_\game(s,\val)\leq \alpha$. In the context of
timed games, optimal strategies may not exist. We generally focus on
$\varepsilon$-optimal strategies, that guarantee
the optimal value, up to a small error $\varepsilon$.

\paragraph{Related work} In the one-player case, computing the
optimal value and an $\varepsilon$-optimal strategy for weighted timed
automata is known to be $\PSPACE$-complete \cite{BouBri07}. In the
two-player case, much work for weighted timed games (also called
priced timed games in the literature) has been achieved in the case of
non-negative weights. In this setting, the value problem is
undecidable \cite{BBR05,BJM15}. To obtain decidability, one
possibility is to limit the number of clocks to 1: then, there is an
exponential-time algorithm to compute the value as well as
$\varepsilon$-optimal strategies \cite{BBM06,Rut11,DueIbs13}, whereas
the problem is only known to be $\P$-hard. The other possibility to
obtain a decidability result~\cite{ABM04,BCFL04} is to enforce a
semantical property of divergence, originally called strictly non-Zeno
cost: it asks that every play following a cycle in the region
automaton has weight at least~$1$.

In the presence of negative weights, undecidability even holds for
weighted timed games with only 2 clocks \cite{BGNK+14} (for the
existence problem asking if a strategy of player \MinPl can guarantee
a given threshold). Only the 1-clock restriction has been studied so
far allowing one to obtain an exponential-time algorithm, under
restrictions on the resets of the clock in cycles \cite{BGH+15}. For
weighted timed games, the strictly non-Zeno cost property has only
been defined and studied in the absence of negative weights
\cite{BCFL04}. As already mentioned in the introduction, the notion is
close, but not equivalent, to the one of robust weighted timed games,
studied for mean-payoff and energy objectives \cite{BreCas14}. In the
next section, we extend the strictly non-Zeno cost property to
negative weights calling it the divergence property, in order to
obtain decidability of a large class of multi-clocks weighted timed
games in the presence of arbitrary weights.

\section{Solving divergent weighted timed
  games}\label{sec:solving-divergent-weighted-timed-games}

We introduce divergent weighted timed games, as an extension of
divergent weighted games to the timed setting.

\begin{definition}
  A weighted timed game \game is divergent when every finite
  play~\play in \game following a cycle in the region automaton
  $\rgame$ satisfies $\weight(\play)\notin (-1,1)$.\footnote{As in
    \cite{BCFL04}, we could replace $(-1,1)$ by $(-\kappa,\kappa)$ to
    define a notion of $\kappa$-divergence. However, since weights and
    guard constraints in weighted timed games are integers, for
    $\kappa\in(0,1)$, a weighted timed game $\game$ is
    $\kappa$-divergent if and only if it is divergent.}
\end{definition}

The weight is not only supposed to be different from $0$, but also far
from~$0$: otherwise, the original intuition on the ultimate growing of
the values of plays would not be fulfilled. If $\game$ has only
non-negative weights on states and transitions, this definition
matches with the \emph{strictly non-Zeno cost} property
of~\cite[Thm.~6]{BCFL04}. Our contributions summarise as follows:

\begin{theorem}
  The value problem over divergent weighted timed games is decidable
  in $2$-\EXP, and is \EXP-hard. Moreover, deciding if a given weighted
  timed game is divergent is a \PSPACE-complete problem.
\end{theorem}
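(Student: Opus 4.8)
The plan is to lift the SCC-based analysis of Section~\ref{sec:solving-divergent-weighted-games} to the region automaton $\rgame$, which is of size exponential in the number of clocks and the maximal constant. The central structural observation is that divergence forces a sign on each cycle of $\rgame$, exactly as in the untimed case. Fix a cycle $\rpath$ in $\rgame$. The plays of $\game$ following $\rpath$ are parameterised by the delays $d_0,\dots,d_{k-1}$, which range over a bounded convex polytope cut out by the guards and regions, and the weight $\weight(\play)=\sum_i d_i\,\weight(\state_i)+\sum_i\weight(\trans_i)$ is affine in these delays. Hence the set of achievable weights along $\rpath$ is a bounded interval, and divergence ($\weight(\play)\notin(-1,1)$) forces this interval to lie entirely in $[1,+\infty)$ or in $(-\infty,-1]$. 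Thus each region cycle is uniformly positive or uniformly negative, and by replaying the concatenation argument of Proposition~\ref{prop:scc-sign} (a positive and a negative region cycle in the same SCC would combine into a region cycle carrying a play of weight in $(-1,1)$), each SCC of $\rgame$ is either positive or negative.

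For the $2$-\EXP\ upper bound, I would run value iteration on $\rgame$, processing its SCCs bottom-up in inverse topological order as in the untimed case. Here the iterates are not scalars but piecewise-affine value functions over regions, computed by a corner-point-style abstraction of $\rgame$; each application of the operator $\IteOpe$ keeps the value functions piecewise affine and computable. On a positive SCC I would establish the timed analogue of Proposition~\ref{prop:positive-scc}, namely that value iteration stabilises after a number of steps polynomial in the number of regions of the SCC: the argument of Proposition~\ref{prop:positive-scc} is reused, decomposing a long optimal play into many positive region cycles whose accumulated weight (each $\geq 1$) eventually dominates any prospective decrease. Negative SCCs are handled by the same duality trick as in Proposition~\ref{prop:negative-scc}, first removing $-\infty$ vertices via a \MaxPl-attractor computation and then swapping players and negating weights. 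Since $\rgame$ is of exponential size, value iteration runs for at most exponentially many steps, and the piecewise-affine value functions have at most exponential complexity; together this yields a doubly exponential procedure. The main obstacle is precisely this last point: controlling the piecewise-affine representation of the value functions across the exponentially many iterations so that it stays (singly) exponential, and proving the stabilisation bound for value functions rather than scalars.

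The \EXP-hardness is obtained by reduction from reachability timed games, which are \EXP-complete \cite{JurTri07}. Given such a game, I would put weight $0$ on every state and weight $1$ on every transition. Every cycle of $\rgame$ uses at least one transition, so every play following it has weight $\geq 1\notin(-1,1)$; the resulting weighted timed game is therefore divergent. Moreover \MinPl can force reaching the target if and only if the value is finite, so comparing the value to a suitable threshold decides the reachability game, giving \EXP-hardness of the value problem.

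Finally, to show that deciding divergence is \PSPACE-complete, I would use the structural lemma above: $\game$ is \emph{not} divergent if and only if $\rgame$ contains either a single region cycle carrying a play of weight in $(-1,1)$, or a positive region cycle and a negative region cycle within one SCC. Both conditions are checked by an on-the-fly exploration of the exponential-size automaton $\rgame$ (each of whose states has polynomial size), guessing a starting state, verifying same-SCC membership by reachability, and tracking along guessed cycles the extremal (infimum and supremum) weights of the plays following them, which are affine functions of the delays and hence computed at the corners of the visited regions. This is an \NL\ computation in the size of $\rgame$, hence a \PSPACE\ procedure in the size of $\game$ (using $\co\NL=\NL$ \cite{Imm88,Sze88}). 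For hardness I would reduce reachability in timed automata, which is \PSPACE-complete: assigning all states to \MinPl, weight $1$ to every transition, adding a $-1$ self-loop on the target and a weight-$0$ edge back to the initial state produces a game whose region automaton has an SCC mixing positive and negative cycles exactly when the target is reachable, so the game is non-divergent if and only if the target is reachable. As \PSPACE\ is closed under complement, deciding divergence is \PSPACE-hard.
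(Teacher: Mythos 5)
Your overall architecture coincides with the paper's: the sign characterisation of the SCCs of $\rgame$, the bottom-up value iteration with piecewise-affine value functions and the duality trick for negative SCCs, the reduction from reachability timed games for \EXP-hardness, and the on-the-fly \PSPACE\ procedure plus the reduction from timed-automaton reachability for the class-decision part all match. Your observation that the weights of plays following a fixed region cycle form an interval avoiding $(-1,1)$, hence that each region cycle has a uniform sign, is also exactly the paper's starting point, and the representation-size issue you flag for value iteration is dispatched in the paper by citing the bound of Alur et al.\ on the number of affine pieces of $\vec x^i$.

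The genuine gap is in the parenthetical ``by replaying the concatenation argument of Proposition~\ref{prop:scc-sign}, a positive and a negative region cycle in the same SCC would combine into a region cycle carrying a play of weight in $(-1,1)$''. This is precisely the step that does \emph{not} transfer from the untimed setting, for two reasons the paper is explicit about. First, the untimed argument takes $q$ copies of a cycle of weight $-p$ and $p$ copies of a cycle of weight $q$ to land on weight exactly $0$; in the timed setting the weight of a play following a region cycle is a real ranging over an interval, so this integer arithmetic is unavailable. Second, and more importantly, the plays following a concatenation such as $\rpath^{m}{\rpath'}^{m'}$ are \emph{coupled}: the valuation reached at the end of one traversal is the starting valuation of the next, so the set of achievable total weights is not the Minkowski sum of the per-cycle intervals, and you cannot conclude that it meets $(-1,1)$ merely because one interval lies in $[1,+\infty)$ and the other in $(-\infty,-1]$. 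The paper resolves this with the corner-point abstraction: it builds the folded orbit graphs $\FOG(\rpath)$ and $\FOG(\rpath')$, whose edges carry corner plays with \emph{integer} weights, finds a blue cycle and a red cycle in their union that share a first vertex --- itself a non-trivial step requiring an auxiliary cycle $\rpath''$ and a case analysis on its sign (Lemma~\ref{lem:timed-touching-cycles}) --- performs the integer arithmetic on these corner cycles to obtain a corner play of weight exactly $0$, and only then transfers back to a genuine play of weight in $(-1/3,1/3)$ via Lemma~\ref{lem:fog-exec}. Without this (or an equivalent mechanism) your analogue of Proposition~\ref{prop:timed-scc-sign} is unproven, and both the value-computation and the class-decision parts of the theorem rest on it.
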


Remember that these complexity results match the ones that can be
obtained from the study of \cite{BCFL04} for non-negative weights.

\paragraph{SCC analysis} Keeping the terminology of the untimed
setting, a cycle~$\rpath$ of~\rgame is said to be positive
(respectively, negative) if every play \play following~$\rpath$
satisfies $\weight(\play)\geq 1$ (respectively,
$\weight(\play)\leq -1$). By definition, every cycle of the region
automaton of a divergent weighted timed game is positive or
negative. Moreover, notice that checking if a cycle $\rpath$ is
positive or negative can be done in polynomial time with respect to
the length of $\rpath$. Indeed, the set
$\{\weight(\play) \mid \play \text{ is a play following } \rpath \}$ is an
interval, as the image of a convex set by an affine function (see
\cite[Sec.~3.2]{BouBri07} for explanation), and the
extremal points of this interval can be computed in polynomial time by
solving a linear problem \cite[Cor.~1]{BouBri07}. We first transfer in
the timed setting the characterisation of divergent games in terms of
SCCs that we relied on in the untimed setting:

\begin{proposition}\label{prop:timed-scc-sign}
  A weighted timed game \game is divergent if and only if, in each SCC
  of \rgame, simple cycles are either all positive, or all negative.
\end{proposition}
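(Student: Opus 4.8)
The plan is to follow the skeleton of the untimed Proposition~\ref{prop:scc-sign}, the only new ingredient being that a cycle $\rpath$ of \rgame no longer carries a single weight but an \emph{interval} of weights $W_\rpath=\{\weight(\play)\mid \play\text{ follows }\rpath\}$, which, as recalled above following \cite{BouBri07}, is an interval because it is the image of a convex polytope of delays under an affine map. For the easy direction, assume every SCC of \rgame has uniformly signed simple cycles and let $\play$ follow an arbitrary cycle $\rpath$ of \rgame. Since $\rpath$ is a cycle it lives in a single SCC, so it decomposes into simple cycles $\rpath_1,\dots,\rpath_j$, all of the same sign; decomposing $\play$ into sub-plays $\play_i$ following the $\rpath_i$ gives $\weight(\play)=\sum_i\weight(\play_i)$ with every summand $\geq 1$, or every summand $\leq -1$. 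Hence $|\weight(\play)|\geq j\geq 1$, so $\weight(\play)\notin(-1,1)$ and \game is divergent.

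For the converse I would argue by contradiction, assuming an SCC of \rgame contains a negative simple cycle $\rpath$ based at a state $q$ and a positive simple cycle $\rpath'$ based at a state $q'$. Using strong connectivity, fix paths $\eta$ from $q$ to $q'$ and $\eta'$ from $q'$ to $q$, and for integers $m,n\geq 1$ form the single region cycle $\Gamma_{m,n}=\rpath^{m}\,\eta\,\rpath'^{\,n}\,\eta'$ based at $q$. Its weight set $W_{\Gamma_{m,n}}$ is again an interval. The whole point is to choose $m,n$ so that this interval contains simultaneously a value $\geq 1$ and a value $\leq -1$: by convexity it then contains all of $[-1,1]$, hence a play following $\Gamma_{m,n}$ of weight in $(-1,1)$, contradicting divergence.

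Writing $[\alpha,\beta]$ and $[\gamma,\delta]$ for $W_\rpath$ and $W_{\rpath'}$ (so $\beta\leq -1$ and $\gamma\geq 1$, with rational endpoints computable by the linear program of \cite{BouBri07}), the key estimate I would establish is that the maximum of $W_{\Gamma_{m,n}}$ is at least $m\beta+n\delta-C$ and its minimum at most $m\alpha+n\gamma+C$, for a constant $C$ independent of $m,n$. I expect this to be the main obstacle: successive traversals of $\rpath$ are \emph{coupled}, since the starting valuation of each copy is imposed by the previous one, so the per-copy weights cannot be picked independently. The way around it is to observe that two time-abstract bisimilar valuations of the same region admit plays following $\rpath$ whose extremal weights differ by a bounded amount (the weight being affine in the bounded valuation, the gap is $O(W|\rpath|)$); thus from any reachable valuation one can realise a traversal within a constant of the extremal per-copy weight, and summing over the $m$ copies of $\rpath$ and the $n$ copies of $\rpath'$ --- the two fixed connectors $\eta,\eta'$ contributing a bounded amount absorbed into $C$ --- yields the estimate.

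It then remains to balance the two blocks. Since $\alpha\leq\beta\leq -1<0<1\leq\gamma\leq\delta$, we have $|\beta|\leq|\alpha|$ and $\gamma\leq\delta$, whence $\gamma/|\alpha|\leq\delta/|\beta|$, so the set of ratios $m/n$ for which $m\alpha+n\gamma\leq 0$ and $m\beta+n\delta\geq 0$ is a non-empty interval. Whenever at least one of $W_\rpath,W_{\rpath'}$ is non-degenerate this interval has non-empty interior; picking a rational ratio strictly inside it and scaling $m,n$ up makes the corrections $\pm C$ negligible, producing in $W_{\Gamma_{m,n}}$ both a play of weight $\geq 1$ and one of weight $\leq -1$, the sought contradiction. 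In the remaining fully degenerate case all of $\rpath,\rpath',\eta,\eta'$ carry a single weight, which is then valuation-independent and integral; one may therefore base two opposite-sign cycles at a common state and concatenate suitable numbers of copies to realise weight exactly $0\in(-1,1)$, exactly as in the combinatorial argument of Proposition~\ref{prop:scc-sign}.
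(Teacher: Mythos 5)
Your easy direction is the paper's, and your overall strategy for the converse --- build one long region cycle $\Gamma_{m,n}=\rpath^{m}\eta\,{\rpath'}^{n}\eta'$, use the fact that its weight set $W_{\Gamma_{m,n}}$ is an interval, and choose $m,n$ so that this interval straddles $(-1,1)$ --- is a genuinely different route from the paper's. But the estimate it all rests on, namely $\max W_{\Gamma_{m,n}}\geq m\beta+n\delta-C$ and $\min W_{\Gamma_{m,n}}\leq m\alpha+n\gamma+C$ with $C$ \emph{independent of $m$ and $n$}, is neither established by your argument nor true in general. What you actually justify is that from every valuation of the base region one can traverse $\rpath$ with weight within some constant $c=O(WM|\rpath|)$ of the extremum; summing over the copies yields a correction of order $(m+n)c$, not a constant, and since $c$ may well exceed $|\beta|$ and $\delta$ this swamps the main terms and the balancing collapses. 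Worse, the constant-correction version is simply false, because successive traversals are coupled through the ending valuation: a self-loop on a region $0<x<1$ with no reset and rate $1$ has $\beta=1$, yet $\sup W_{\rpath^{m}}=1$ for every $m$, since the total time elapsed along $\rpath^{m}$ is still bounded by $1$. The true growth rate of $\max W_{\rpath^{m}}$ is not $\beta$ but the maximal \emph{mean cycle weight} of the corner-point abstraction $\FOG(\rpath)$, which can be strictly smaller than $\beta$ when the corner reached by a $\beta$-optimal traversal no longer admits near-optimal continuations. So the correct slopes in your linear programme are the extremal mean-cycle weights of the two FOGs, not the single-traversal extrema $\alpha,\beta,\gamma,\delta$.

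Even after that repair a gap remains. The degenerate case of the balancing is not ``$\rpath,\rpath',\eta,\eta'$ each carry a single weight'' but ``all cycles of $\FOG(\rpath)$ have the same mean weight $a$ and all cycles of $\FOG(\rpath')$ the same mean weight $c$''; there the optimal ratio $n/m=|a|/c$ only brings both $\max W_{\Gamma_{m,n}}$ and $\min W_{\Gamma_{m,n}}$ to within a constant of $0$, which need not reach $\pm 1$, so convexity alone does not produce a weight in $(-1,1)$. This is exactly where the paper's mechanism is unavoidable: corner-to-corner plays have \emph{integer} weights, so one finds a cycle $C$ of $\FOG(\rpath)$ of integer weight $-w\leq-1$ and a cycle $C'$ of $\FOG(\rpath')$ of integer weight $w'\geq 1$ based at a common corner (with a separate reduction, via an auxiliary cycle $\rpath''$, when no common corner exists), concatenates $w'$ copies of $C$ with $w$ copies of $C'$ to get corner weight \emph{exactly} $0$, and invokes Lemma~\ref{lem:fog-exec} to obtain a genuine play of weight in $(-1/3,1/3)$, contradicting divergence; the case of cycles in the same SCC not sharing a state is then handled by connecting paths exactly as in Proposition~\ref{prop:scc-sign}. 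Your interval framing is a nice way to see why one heavy-positive and one heavy-negative witness along the same region cycle suffice, but the quantitative core of the argument has to go through the corner-point abstraction and the integrality of its weights.
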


The proof of the reciprocal follows the exact same reasoning than for
weighted games (see Proposition~\ref{prop:scc-sign}). For the direct
implication, the situation is more complex: we need to be more careful
in the composition of cycles with each others, and weights in the
timed game are no longer integers, forbidding the arithmetical
reasoning we applied. To help us, we will rely on the corner-point
abstraction introduced in \cite{BouBri08a} to study multi-weighted
timed automata. It consists in adding a weighted information to the
edges $(s,r)\xrightarrow{\delta}(s',r')$ of the region
automaton. Since the weights depend on the exact valuations $\nu$ and
$\nu'$, taken in regions $r$ and $r'$, respectively, the weight of
such an edge in the region automaton is computed for each pair of
\emph{corners} of the regions. Formally, corners of region $r$ are
valuations in $\overline{r} \cap \N^\Clocks$ (where $\overline{r}$
denotes the topological closure of $r$). Since corners do not
necessarily belong to their regions, we must consider a modified
version $\overline\game$ of the game $\game$ where all strict
inequalities of guards have been replaced with non-strict ones. Then,
for a path $\rpath$ in $\rgame$, we denote by $\overline\rpath$ the
equivalent of path $\rpath$ in $\mathcal R(\overline\game)$. 

In the following, our focus is on cycles of the region automaton, so
we only need to consider the aggregation of all the behaviours
following a cycle. Inspired by the \emph{folded orbit graphs} (FOG)
introduced in~\cite{Pur00}, we define the folded orbit graph
$\FOG(\rpath)$ of a cycle
$\rpath=(\state_1,r=r_1) \xrightarrow{\trans_1} (\state_2,r_2)
\xrightarrow{\trans_2} \cdots \xrightarrow{\trans_n} (\state_1,r)$ in
\rgame as a graph whose vertices are corners of region $r$, and that
contains an edge from corner $\regv$ to corner $\regv'$ if there
exists a finite play $\overline{\play}$ in $\overline\game$ from
$(s_1,\regv)$ to $(s_1,\regv')$ following $\overline\rpath$ jumping
from corners to corners\footnote{Notice that if there is a play from
  $(s_1,\regv)$ to $(s_1,\regv')$ in $\overline\game$, there is
  another one that only jumps at corners of regions.}. We fix such a
finite play $\overline{\play}$ arbitrarily and label the edge between
$\regv$ and $\regv'$ in the FOG by this play: it is then denoted by
$\regv\xrightarrow{\overline{\play}}\regv'$. Moreover, since
$\overline\play$ jumps from corners to corners, its weight
$\weight(\overline\play)$ is an integer, conforming to the definitions
of the corner-point abstraction of \cite{BouBri08a}. Following
\cite[Prop.~5]{BouBri08a} (see Appendix~\ref{app:FOG} for
a complete proof), it is possible to find a play $\play$ in $\game$
close to $\overline\play$, in the sense that we control the difference
between their respective weights:
\begin{lemma}\label{lem:fog-exec}
  For all $\varepsilon>0$ and edge
  $\regv\xrightarrow{\overline{\play}}\regv'$ of $\FOG(\rpath)$, there
  exists a play \play in \game following \rpath such that
  $|\weight(\play)-\weight(\overline{\play})|\leq \varepsilon$.
\end{lemma}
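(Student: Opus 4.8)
The plan is to prove Lemma~\ref{lem:fog-exec}: given an edge $\regv\xrightarrow{\overline{\play}}\regv'$ in $\FOG(\rpath)$, we must exhibit a genuine play $\play$ in $\game$ following $\rpath$ whose weight is within $\varepsilon$ of $\weight(\overline\play)$. Let me think about what's actually being claimed and how to prove it.

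The FOG edge is labelled by a play $\overline\play$ in the closed game $\overline\game$ that goes from corner $\regv$ to corner $\regv'$, jumping from corners to corners along $\overline\rpath$. The corners are integer valuations in the closures of regions. So $\overline\play$ is an idealized execution that hits the extreme integer points of each region. The problem is that these corners may not actually belong to the (open) regions, and the delays used in $\overline\play$ may force clock valuations onto region boundaries that are excluded by strict guards. So $\overline\play$ need not be a real play of $\game$. We need to perturb it slightly into a real play $\play$ of $\game$.

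Here's my approach. The key observation is that the corner-point abstraction and the original game are related by a continuity/convexity argument (this is exactly the content cited as \cite[Prop.~5]{BouBri08a} and \cite[Sec.~3.2]{BouBri07}). The sequence of transitions taken in $\overline\play$ is fixed; what varies are the delays $d_1, d_2, \ldots, d_n$ chosen at each step. For the abstract play, these delays are chosen so that the clock valuations land exactly on corners. Let me parametrize real plays following $\rpath$ by their delay vectors $(d_1, \ldots, d_n)$. The set of delay vectors that produce a valid play following $\rpath$ (respecting all the guards of $\game$, which may be strict) is a nonempty open convex polytope $D$, and the corner-to-corner abstract play corresponds to a point $\overline{d}$ in its topological closure $\overline{D}$. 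The accumulated weight $\weight(\play)$ is an affine function of the delay vector, since each edge contributes $d_i \cdot \weight(s_i) + \weight(\trans_i)$.

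\noindent\textbf{First step.} I would make precise that the weight is affine in the delays, so $\play \mapsto \weight(\play)$ extends to a continuous (indeed affine) function on $\overline{D}$ whose value at the corner point $\overline d$ equals $\weight(\overline\play)$. This is where the claim that the weight-image is an interval (stated just before the lemma) comes from, and it is the main structural ingredient.

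\noindent\textbf{Second step.} Since $\overline d \in \overline D$ and $D$ is a nonempty open convex set, I can pick any interior point $d^0 \in D$ (for instance, a delay vector realizing some genuine play following $\rpath$, which exists because $\rpath$ is a path of $\rgame$ and hence has at least one concrete witness play in $\game$). For $\lambda \in (0,1]$ close to $0$, the convex combination $d_\lambda = (1-\lambda)\overline d + \lambda d^0$ lies in $D$ by convexity (an open convex set is "star-shaped from its closure towards its interior"), so it realizes a genuine play $\play_\lambda$ of $\game$ following $\rpath$. By affineness of the weight, $|\weight(\play_\lambda) - \weight(\overline\play)| = \lambda\, |\weight(d^0) - \weight(\overline d)|$, which tends to $0$ as $\lambda \to 0$. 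Choosing $\lambda$ small enough makes this quantity at most $\varepsilon$, and $\play = \play_\lambda$ is the desired play.

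\noindent\textbf{Main obstacle.} The delicate point is not the convexity computation but justifying that the abstract corner play $\overline d$ genuinely lies in $\overline D$ and that $D$ is open, convex, and nonempty for the fixed transition sequence of $\rpath$ in $\game$ (with its strict guards). This requires unwinding the definition of the region automaton and the corner-point abstraction to check that following $\overline\rpath$ in $\overline\game$ from corner to corner corresponds to a limit of delay vectors each of which follows $\rpath$ in $\game$ up to the boundary constraints, and that relaxing strict inequalities to non-strict ones in passing to $\overline\game$ is exactly what takes $D$ to $\overline D$. I would handle this by expressing the guard and reset constraints region-by-region as a system of (non-strict and strict) linear inequalities on the cumulative delays, so that $\overline D$ is the polytope for $\overline\game$ and $D$ its relative interior; the convexity and affineness then follow immediately, and the argument reduces to the standard fact relating a play in the corner-point abstraction to nearby concrete plays, exactly as in \cite[Prop.~5]{BouBri08a}.
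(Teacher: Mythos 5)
Your argument is correct, but it is not the route the paper takes. The paper proves a \emph{stronger, local} statement (Appendix~\ref{app:FOG}): for every valuation $\val\in r\cap\Ball(\regv,\varepsilon)$ there is a play $\play$ following $\rpath$ that ends in $r\cap\Ball(\regv',\varepsilon)$ with $|\weight(\play)-\weight(\overline\play)|\leq 2\varepsilon|\rpath|W$. It is proved by induction along the edges of $\rpath$: at each step the time-successors of $\val$ form a line parallel to that of the corner $\regv$, so one can pick a delay $d'$ with $|d-d'|\leq 2\varepsilon$ landing in $r''\cap\Ball(\regv'',\varepsilon)$ before the reset, and the weight estimate follows from $|\weight(\play)-\weight(\overline\play)|\leq W\, d(\play,\overline\play)$. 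Your proof instead works globally on the polytope $D$ of delay vectors realizing $\rpath$: weight is affine in the delays, $\overline d\in\overline D$, and sliding along the segment towards an interior witness gives a concrete play of nearby weight. Both are sound; note only that $D$ is in general \emph{relatively} open rather than open (regions may be lower-dimensional), as you acknowledge at the end, and that $D\neq\emptyset$ and $\overline D$ being the non-strict relaxation both need the pre-stability of regions, which you correctly flag as the point to unwind. What the paper's version buys is control of the endpoints of $\play$ (it starts and ends $\varepsilon$-close to the prescribed corners) together with an explicit Lipschitz-type bound; this is what lets the lemma compose under concatenation of FOG edges, as used in Lemma~\ref{lem:timed-touching-cycles}. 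What your version buys is brevity and conceptual clarity --- it is exactly the ``image of a convex set under an affine map is an interval'' argument of \cite[Sec.~3.2]{BouBri07} already invoked in Section~\ref{sec:solving-divergent-weighted-timed-games}, and it handles concatenated cycles for free since they just give a larger polytope --- but if endpoint control were later needed you would have to add the corresponding ball constraints to $D$ and rerun the segment argument.
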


In order to prove the direct implication of
Proposition~\ref{prop:timed-scc-sign}, suppose now that \game is
divergent, and consider two simple cycles \rpath and $\rpath'$ in the
same SCC of $\rgame$. We need to show that they have the same sign.
Lemma~\ref{lem:timed-touching-cycles} will first take care of the case
where $\rpath$ and $\rpath'$ share a state $(\state,r)$.

\begin{lemma}\label{lem:timed-touching-cycles}
  If \game is divergent and two cycles $\rpath$ and $\rpath'$ of
  \rgame share a state~$(\state,r)$, they are either both positive or
  both negative.
\end{lemma}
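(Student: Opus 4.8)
The plan is to prove Lemma~\ref{lem:timed-touching-cycles} by working in the corner-point abstraction, where weights are integers and the arithmetical reasoning from Proposition~\ref{prop:scc-sign} can be recovered. Since $\rpath$ and $\rpath'$ share a state $(\state,r)$, their folded orbit graphs $\FOG(\rpath)$ and $\FOG(\rpath')$ live on the same vertex set, namely the corners of region $r$. The key observation is that an edge $\regv \xrightarrow{\overline{\play}} \regv'$ in $\FOG(\rpath)$ carries an integer weight $\weight(\overline{\play})$, and by Lemma~\ref{lem:fog-exec} this integer is approximated arbitrarily well by the weight of a genuine play following $\rpath$. Hence the sign of $\rpath$ (positive or negative, as guaranteed by divergence) must match the sign of every $\weight(\overline{\play})$ appearing as an edge label of $\FOG(\rpath)$: if some corner-to-corner weight were, say, $\leq 0$ while $\rpath$ is positive, Lemma~\ref{lem:fog-exec} with small $\varepsilon$ would yield a play of weight $<1$ following $\rpath$, contradicting positivity.

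First I would make precise the relationship between the integer weights on $\FOG$ edges and the sign of a cycle: a cycle $\rpath$ is positive if and only if every edge of $\FOG(\rpath)$ has strictly positive integer weight, and negative if and only if every such weight is strictly negative. This follows because the set of play-weights following $\rpath$ is an interval whose endpoints are attained (up to closure) at corner-to-corner plays, so the interval lies in $[1,+\infty)$ exactly when all corner weights are $\geq 1$, i.e.\ positive integers, and symmetrically for the negative case. In particular, because $\game$ is divergent, no $\FOG$ edge can have weight $0$, and within a single cycle all $\FOG$-edge weights share the same sign.

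Next I would combine $\rpath$ and $\rpath'$ at the shared corner-vertex. Suppose for contradiction that $\rpath$ is negative and $\rpath'$ is positive. Since both $\FOG$ graphs are built on the same corners of $r$ and every corner is reachable back to itself within each $\FOG$ (the cycle returns to state $(\state_1,r)$), I can find in $\FOG(\rpath)$ a closed walk through corners of total negative weight $-p$ and in $\FOG(\rpath')$ a closed walk of total positive weight $p'$, where these integer weights are realised by concatenations of corner-to-corner plays. Crucially, since these are closed walks on the same finite vertex set, I can glue them: pick corner-walks $w$ and $w'$ that form cycles at a common corner, then form the integer-weighted concatenation that cycles $p'$ times around the negative walk and $p$ times around the positive walk. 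This composite corner-walk follows a cycle in $\rgame$, has integer weight $p'(-p)+p(p')=0$, and by Lemma~\ref{lem:fog-exec} is approximated by a play following that cycle with weight arbitrarily close to $0$, hence in $(-1,1)$ — contradicting divergence of $\game$.

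The main obstacle, and where I would spend the most care, is the gluing step: unlike the untimed case, corner-to-corner plays in $\FOG(\rpath)$ need not return to the same corner they started from, so I cannot simply repeat a single simple cycle. I expect to need a pigeonhole argument on the finitely many corners of $r$ to extract, from iterating $\rpath$ (respectively $\rpath'$) sufficiently many times, a genuine \emph{corner cycle} — a closed walk in the $\FOG$ returning to its starting corner — whose sign is inherited from the sign of the underlying cycle. Once such corner cycles are available at a common corner for both $\rpath$ and $\rpath'$, the integer arithmetic is identical to Proposition~\ref{prop:scc-sign}, and the final appeal to Lemma~\ref{lem:fog-exec} to transfer the weight-$0$ corner cycle back to an actual play of weight in $(-1,1)$ closes the argument.
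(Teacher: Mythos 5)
Your overall strategy is the same as the paper's: pass to the corner-point abstraction via the folded orbit graphs, do the integer arithmetic of Proposition~\ref{prop:scc-sign} on closed corner walks, and use Lemma~\ref{lem:fog-exec} to transfer a weight-$0$ corner cycle back to an actual play of weight in $(-1,1)$. The case where a blue (from $\FOG(\rpath)$) corner cycle and a red (from $\FOG(\rpath')$) corner cycle share a common corner is handled correctly and matches the paper's first case, including the observation that divergence forces every corner play following $\overline\rpath$ (hence every concatenation of $\FOG(\rpath)$ edges) to have integer weight $\leq -1$ when $\rpath$ is negative, and symmetrically for $\rpath'$.

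The genuine gap is the step where you ``pick corner-walks $w$ and $w'$ that form cycles at a common corner.'' Nothing guarantees such a common corner exists: the pigeonhole argument you sketch only yields \emph{some} closed corner walk in $\FOG(\rpath)$ at \emph{some} corner $\regv_1$, and one in $\FOG(\rpath')$ at some possibly different corner $\regv_2$, and the cycles of the two FOGs may well live on disjoint sets of corners of $r$. You cannot bridge them naively either, because a connecting path in the combined graph $\FOG(\rpath,\rpath')$ mixes blue and red edges, so the sign of its weight is a priori uncontrolled. The paper resolves this with a further idea your proposal lacks: take a blue cycle $C$ and a red cycle $C'$ that can reach each other in $\FOG(\rpath,\rpath')$, let $P$ and $P'$ be connecting paths between their first vertices, and observe that the closed walk $C''=P\cdot P'$ projects to a genuine cycle $\rpath''$ of $\rgame$ (alternating $\rpath$ and $\rpath'$ according to edge colours). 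Divergence of $\game$ then forces $\rpath''$ to be positive or negative, which pins down the sign of the integer weight of $C''$; one then builds $\FOG(\rpath,\rpath'')$ or $\FOG(\rpath'',\rpath')$, in which $C$ (resp.\ $C'$) and $C''$ \emph{do} share a first vertex, and concludes by the common-corner case. Without this (or an equivalent argument handling connecting paths of either sign), your proof does not go through in general.
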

\begin{proof}
  Suppose by contradiction that $\rpath$ is negative and $\rpath'$ is
  positive. We assume that $(\state,r)$ is the first state of both
  $\rpath$ and $\rpath'$, possibly performing cyclic permutations of
  states if necessary. We construct a graph $\FOG(\rpath,\rpath')$ as
  the union of $\FOG(\rpath)$ and $\FOG(\rpath')$ (that share the same
  set of vertices), colouring in blue the edges of $\FOG(\rpath)$ and
  in red the edges of $\FOG(\rpath')$. A path in
  $\FOG(\rpath,\rpath')$ is said blue (respectively, red) when all of
  its edges are blue (respectively, red).
  
  \medskip We assume first that there exists in $\FOG(\rpath,\rpath')$
  a blue cycle $C$ and a red cycle~$C'$ with the same first vertex
  \regv. Let $k$ and $k'$ be the respective lengths of $C$ and $C'$,
  so that $C$ can be decomposed as
  $\regv\xrightarrow{\overline{\play_1}}\cdots
  \xrightarrow{\overline{\play_k}}\regv$ and $C'$ as
  $\regv\xrightarrow{\overline{\play_1'}}\cdots
  \xrightarrow{\overline{\play_{k'}'}}\regv$, where
  $\overline{\play_i}$ are plays following $\overline{\rpath}$ and
  $\overline{\play_i'}$ are plays following $\overline{\rpath'}$, all
  jumping only on corners of regions. Let $\overline{\play}$ be the
  concatenation of $\overline{\play_1},\ldots,\overline{\play_k}$, and
  $\overline{\play'}$ be the concatenation of
  $\overline{\play_1'},\ldots,\overline{\play_{k'}'}$. Recall that
  $w=|\weight(\overline{\play})|$ and
  $w'=|\weight(\overline{\play'})|$ are integers. Since $\rpath$ is
  negative, so is $\rpath^k$, the concatenation of $k$ copies of
  $\rpath$ (the weight of a play following it is a sum of weights all
  below $-1$). Therefore, $\overline{\play}$, that follows $\rpath^k$,
  has a weight $\weight(\overline{\play})\leq -1$. Similarly,
  $\weight(\overline{\play'})\geq 1$. We consider the cycle $C''$
  obtained by concatenating~$w'$ copies of $C$ and $w$ copies of
  $C'$. Similarly, we let $\overline{\play''}$ be the play obtained by
  concatenating $w'$ copies of $\overline{\play}$ and $w$ copies of
  $\overline{\play'}$. By Lemma~\ref{lem:fog-exec}, there exists a
  play $\play''$ in \game, following $C''$ such that
  $|\weight(\play'') - \weight(\overline{\play''})|\leq 1/3$.  But
  $\weight(\overline{\play''})=\weight(\overline{\play})w'+
  \weight(\overline{\play'})w=0$, so $\weight(\play'')\in(-1,1)$: this
  contradicts the divergence of $\game$, since $\play''$ follows the
  cycle of $\rgame$ composed of $w'$ copies $\rpath^k$ and $w$ copies
  of ${\rpath'}^{k'}$ of \rgame.

  \medskip We now return to the general case, where $C$ and $C'$ may
  not exist. Since $\FOG(\rpath)$ and $\FOG(\rpath')$ are finite
  graphs with no deadlocks (every corner has an outgoing edge), from
  every corner of $\FOG(\rpath,\rpath')$, we can reach a blue simple
  cycle, as well as a red simple cycle. Since there are only a finite
  number of simple cycles in $\FOG(\rpath,\rpath')$, there exists a
  blue cycle $C$ and a red cycle $C'$ that can reach each other in
  $\FOG(\rpath,\rpath')$.\footnote{Indeed, one can apply the following
    construction. We start from a fixed vertex and reach a red simple
    cycle. We fix a vertex of this red cycle, and from it we can reach
    a blue simple cycle. We fix a vertex of this blue cycle, and from
    it we can reach a red simple cycle. There is a finite number of
    red and blue simple cycles, so we keep alternating between red and
    blue until we reach a previously seen red simple cycle. This red
    cycle and, for example, the previous blue one can reach each
    other.} In $\FOG(\rpath,\rpath')$, we let $P$ be a path from the
  first vertex of $C$ to the first vertex of $C'$, and $P'$ be a path
  from the first vertex of $C'$ to the first vertex of $C$. Consider
  the cycle $C''$ obtained by concatenating $P$ and $P'$. As a cycle
  of $\FOG(\rpath,\rpath')$, we can map it to a cycle $\rpath''$ of
  \rgame (alternating \rpath and $\rpath'$ depending on the colours of
  the traversed edges), so that $C''$ is a cycle (of length 1) of
  $\FOG(\rpath'')$. By the divergence of $\game$, $\rpath''$ is
  positive or negative. Suppose for instance that it is
  positive. Since $(s,r)$ is the first state of both $\rpath$ and
  $\rpath''$, we can construct the $\FOG(\rpath,\rpath'')$, in which
  $C$ is a blue cycle and $C''$ is a red cycle, both sharing the same
  first vertex. We then conclude with the previous case. A similar
  reasoning with $\rpath'$ applies to the case that $\rpath''$ is
  negative. Therefore, in all cases, we reached a contradiction.\qed
\end{proof}

To finish the proof of the direct implication of
Proposition~\ref{prop:timed-scc-sign}, we suppose that the two simple
cycles $\rpath$ and $\rpath'$ in the same SCC of $\rgame$ do not share
any states. By strong connectivity, in $\rgame$, there exists a path
$\rpath_1$ from the first state of~$\rpath$ to the first state of
$\rpath'$, and a path $\rpath_2$ from the first state of $\rpath'$ to
the first state of $\rpath$. Consider the cycle of $\rgame$ obtained
by concatenating $\rpath_1$ and~$\rpath_2$. By divergence of $\game$,
it must be positive or negative. Since it shares a state with both
$\rpath$ and $\rpath'$, Lemma~\ref{lem:timed-touching-cycles} allows
us to prove a contradiction in both cases. This concludes the proof of
Proposition~\ref{prop:timed-scc-sign}.

\paragraph{Value computation}
We will now explain how to compute the values of a divergent weighted
timed game \game.
Remember that the function $\Val$ maps configurations of
$\States\times\Rplus^\Clocks$ to a value in
$\Rbar=\R\cup\{-\infty,+\infty\}$.  The semi-algorithm of
\cite{BCFL04} relies on the same principle as the value iteration
algorithm used in the untimed setting, only this time we compute the
greatest fixed point of operator
$\IteOpe\colon\Rbar^{\States\times\Rplus^\Clocks} \to
\Rbar^{\States\times\Rplus^\Clocks}$, defined by
$\IteOpe(\vec x)_{(\state,\val)}=0$ if $\state\in\StatesT$, and
otherwise
\[\IteOpe(\vec x)_{(\state,\val)}=
\begin{cases}
  \displaystyle{\sup_{(\state,\val)\xrightarrow{d,\trans}(\state',\val')} 
   d\times\weight(\state)+\weight(\trans)+\vec x_{(\state',\val')}
  } &
  \text{if }\state\in\StatesMax\\
  \displaystyle{\inf_{(\state,\val)\xrightarrow{d,\trans}(\state',\val')}
   d\times\weight(\state)+\weight(\trans)+\vec x_{(\state',\val')} } &
  \text{if }\state\in\StatesMin
\end{cases}\]
\noindent where $(\state,\val)\xrightarrow{d,\trans}(\state',\val')$
ranges over the edges of the infinite weighted game associated with
$\game$ (the one defining its semantics). Then, starting from
$\vec x^0$ mapping every configuration to $+\infty$, we let
$\vec x^i= \IteOpe(\vec x^{i-1})$ for all $i>0$. Since $\vec x^0$ is
piecewise affine (even constant), and $\IteOpe$ preserves 
piecewise affinity, all iterates $\vec x^i$ are piecewise
affine with a finite amount of pieces. In \cite{ABM04}, it is proved
that $\vec x^i$ has at most a number of pieces linear in the size of
\rgame and exponential
in~$i$.\footnote{
  For divergent games with only non-negative weights, the fixed point
  is reached after a number of steps linear in the size of the region
  automaton \cite{BCFL04}: overall, this leads to a doubly
  exponential complexity.}

First, we can compute the set of configurations having value
$+\infty$. Indeed, the region automaton \rgame can be seen as a
reachability two-player game \hgame by saying that $(\state,r)$
belongs to \MinPl (\MaxPl, respectively) if $\state\in\StatesMin$
($\state\in\StatesMax$, respectively). Notice that if
$\Val(\state,\val)=+\infty$, then for all
$\val'\in[\val], \Val(\state,\val')=+\infty$. Therefore, a
configuration $(\state,\val)$ cannot reach the target states if and
only if $(\state,[\val])$ is not in the attractor of \MinPl to the
targets in \hgame. As a consequence, we can compute all such states of
\hgame with complexity linear in the size of~\rgame.

We then decompose \rgame in SCCs. By
Proposition~\ref{prop:timed-scc-sign}, each SCC is either positive or
negative (i.e.\ it contains only positive cycles, or only negative
ones). Then, in order to find the sign of a component, it suffices to
find one of its simple cycles, for example with a depth-first search,
then compute the weight of one play following it.

As we did for weighted (untimed) games, we then compute values in
inverse topological order over the SCCs. Once the values of all
configurations in $(s,r)$ appearing in previously considered SCCs have
been computed, they are no longer modified in further
computation. This is the case, in particular, for all pairs $(s,r)$
that have value $+\infty$, that we precompute from the beginning. In
order to resolve a positive SCC of \rgame, we apply \IteOpe on the
current piecewise affine function, only modifying the pieces appearing
in the SCC, until reaching a fixed point over these pieces. In order
to resolve a negative SCC of \rgame, we compute the attractor for
\MaxPl to the previously computed SCCs: outside of this attractor, we
set the value to $-\infty$. Then, we apply \IteOpe for pieces
appearing in the SCC, initialising them to $-\infty$ (equivalently, we
compute in the dual game, that is a positive SCC), until reaching a
fixed point over these pieces. The next proposition contains the
correction and termination arguments that where presented in
Propositions~\ref{prop:positive-scc}, \ref{prop:minus-infty}, and
\ref{prop:negative-scc} for the untimed setting:
\begin{proposition}\label{prop:VI-timed}
  Let $\game$ be a divergent game with no configurations of value
  $+\infty$.
  \begin{enumerate}
  \item\label{item:positive} The value iteration algorithm applied on
    a positive SCC of $\rgame$ with $n$ states stabilises after at
    most $n$ steps.
  \item\label{item:minus-infinity} In a negative SCC, states $(s,r)$
    of $\rgame$ of value $-\infty$ are all the ones not in the
    attractor for \MaxPl to the targets.
  \item\label{item:negative} The value iteration algorithm,
    initialised with $-\infty$, applied on a negative SCC of $\rgame$
    with $n$ states, and no vertices of value $-\infty$, stabilises
    after at most $n$ steps.
  \end{enumerate}
\end{proposition}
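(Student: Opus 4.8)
The plan is to transfer each of the three untimed arguments---Propositions~\ref{prop:positive-scc}, \ref{prop:minus-infty} and \ref{prop:negative-scc}---to the region automaton \rgame, replacing graph cycles by cycles of \rgame and using that in a positive (respectively negative) SCC every play following such a cycle has weight at least~$1$ (respectively at most~$-1$).

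For item~\ref{item:positive} I would mirror the counting argument of Proposition~\ref{prop:positive-scc}. Let $n$ be the number of states of the SCC and let $W$ bound the absolute weight $d\,\weight(\state)+\weight(\trans)$ of an edge of the semantics; since clocks are bounded by~$M$, one has $W\leq M\max_{\state}|\weight(\state)|+\max_{\trans}|\weight(\trans)|<\infty$. After $n$ steps every configuration of the SCC carries a finite value (value iteration extends the attractor computation of \hgame, and by hypothesis no configuration has value $+\infty$), so fix a bound $K$ on $|\vec x^n_{(\state,\val)}|$ over the SCC and an integer $p>(2K+W(n-1))n$. Assuming some configuration whose value still strictly decreases between steps $n$ and $n+p$, I would unfold the successively optimal choices of \IteOpe into a finite play of length~$p$ that follows a path of \rgame and whose weight equals this decrease. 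Such a path visits a state of the SCC more than $2K+W(n-1)$ times, hence splits into at least $2K+W(n-1)$ cycles of \rgame and a simple path of length at most $n-1$; positivity forces each cycle to contribute at least~$1$, so the play weighs at least $(2K+W(n-1))-(n-1)W=2K$, which contradicts the assumed decrease exactly as in the untimed proof. Monotonicity of $(\vec x^i)_i$ then gives stabilisation at step~$n$.

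For item~\ref{item:minus-infinity} I would replay Proposition~\ref{prop:minus-infty} inside \hgame. If $(\state,[\val])$ lies in the attractor of \MaxPl to the targets, a memoryless winning reachability strategy forces every outcome to a target within $|\rgame|$ steps, so the value is finite. Otherwise \MinPl has a strategy keeping the play away from the targets forever; since no configuration has value $+\infty$, \MinPl may loop through arbitrarily many negative cycles of \rgame (each of weight at most $-1$) before switching to a strategy that reaches a target, driving the value to $-\infty$. Item~\ref{item:negative} then follows by the duality already used in Proposition~\ref{prop:negative-scc}: swapping \MinPl with \MaxPl and negating every weight turns the negative SCC into a positive one, and the value iteration initialised at $-\infty$ computes the opposites of the vectors produced by item~\ref{item:positive} in this dual game, hence stabilises after $n$ steps.

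The delicate point is the unfolding in item~\ref{item:positive}. Here \IteOpe optimises over a continuum of delays, and over an open guard interval the infimum (for \MinPl) or supremum (for \MaxPl) of the piecewise affine map $d\mapsto d\,\weight(\state)+\weight(\trans)+\vec x_{(\state',\val')}$ need not be attained, so no single edge realises the value and the exact telescoping of the untimed argument breaks. To restore it I would argue at corners: each $\vec x^i$ is piecewise affine, an affine function reaches its extrema at the vertices of a polytope, so $\vec x^{n+p}=\vec x^n$ can be tested corner by corner, and in the closed game $\overline\game$ the optimal delay out of a corner again lands on a corner. The resulting corner-to-corner play has integer weight and follows a path of $\mathcal R(\overline\game)$; by Lemma~\ref{lem:fog-exec} each of its cycles inherits the sign of the SCC, so the integer counting argument applies verbatim. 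This corner-point reduction is precisely what makes the real-weighted value iteration behave like its integer-weighted untimed counterpart.
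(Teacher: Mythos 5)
Your proposal follows essentially the same route as the paper: item~\ref{item:positive} is the counting argument of Proposition~\ref{prop:positive-scc} lifted to \rgame via the unfolding of \IteOpe (the paper's Lemma~\ref{lem:valite-timed}) and a cycle decomposition of the projected path, item~\ref{item:minus-infinity} replays Proposition~\ref{prop:minus-infty} on the attractor in \hgame, and item~\ref{item:negative} is the same duality as Proposition~\ref{prop:negative-scc}; the only divergence is in constants (the paper bounds the residual by $(n-1)^2(W+MW)$ rather than $W(n-1)$, which is immaterial since any sufficiently large $p$ yields the contradiction). Your closing remark on the non-attainment of the $\inf$/$\sup$ over delays is in fact \emph{more} careful than the paper, whose Lemma~\ref{lem:valite-timed} silently assumes an optimising transition exists; your corner-point repair (or, alternatively, taking $\varepsilon$-optimal transitions with total slack $p\varepsilon<1$) is a legitimate way to close that small gap.
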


By the complexity results of \cite[Thm.~3]{ABM04}, we obtain a doubly
exponential time algorithm computing the value of a divergent weighted
timed game. This shows that the value problem is in $2$-\EXP\ for
divergent weighted timed game. The proof for \EXP-hardness comes from
a reduction of the problem of solving timed games with reachability
objectives \cite{JurTri07}. To a reachability timed game, we simply
add weights 1 on every transition and 0 on every state, making it a
divergent weighted timed game. Then, \MinPl wins the reachability
timed game if and only if the value in the weighted timed game is
lower than threshold $\alpha=|\States|\times |\regions \Clocks M|$. A
complete proof can be found in
Appendix~\ref{app:exp-hardness}.

In an SCC of $\rgame$, the value iteration algorithm of \cite{ABM04}
allows us to compute an $\varepsilon$-optimal strategy for both
players (for configurations having a finite value), that is constant
(delay or fire a transition) over each piece of the piecewise affine
value function. As in the untimed setting, we may then compose such
$\varepsilon$-optimal strategies to obtain an $\varepsilon'$-optimal
strategy in $\game$ ($\varepsilon'$ is greater than $\varepsilon$, but
can be controlled with respect to the number of SCCs in $\rgame$).

\paragraph{Class decision}
Deciding if a weighted timed game is divergent is \PSPACE-complete.
The complete proof is given in Appendix~\ref{app:class-decision}, and
is an extension of the untimed setting \NL-complete result, but this
time we reason on regions, hence the exponential blowup in complexity:
it heavily relies on Proposition~\ref{prop:timed-scc-sign}, as well as
the corner-point abstraction to keep a compact representation of
plays.

\section{Conclusion}

In this article, we introduced the first decidable class of weighted
timed games with arbitrary weights, with no restrictions on the number
of clocks. Future work include the approximation problem for a larger
class of weighted timed games (divergent ones where we also allow
cycles of weight exactly 0), already studied with only non-negative
weights by~\cite{BJM15}.

\bibliographystyle{plain}

\newpage

\appendix

\section{Technical lemmas regarding value iteration
  algorithms}\label{app:technical}

\begin{lemma}[untimed setting notations]\label{lem:valite}
  For all $i<j\in\N$, if $\vec x^j\neq \vec x^i$ then for all
  $\vertex\in\Vertices$ there exists $\vertex'$ and a play \play from
  \vertex to $\vertex'$ with $|\play|=j-i$ and
  $\weight(\play)=\vec x^j_\vertex-\vec x^i_{\vertex'})$.
\end{lemma}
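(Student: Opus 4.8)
The plan is to prove the statement by induction on the gap $m = j-i \geq 1$ (the case $j=i$ is vacuous, since then $\vec x^j = \vec x^i$), reading the finite play off the edges that successively realise the $\min$/$\max$ in $\IteOpe$. Throughout I would work in the regime relevant to the use of the lemma, where all the vectors $\vec x^i,\dots,\vec x^j$ are finite: this is exactly the situation in a positive SCC after the first $n$ steps (cf.\ Proposition~\ref{prop:positive-scc}), the vertices of value $+\infty$ having been discarded beforehand.

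For the base case $m=1$ I would unfold a single application of $\IteOpe$. Fix $v$: if $v\in\VerticesMin$, pick an edge $e=(v,a,v')\in\Edges$ achieving the minimum defining $\vec x^j_v = \IteOpe(\vec x^i)_v$, and symmetrically an edge achieving the maximum if $v\in\VerticesMax$. In either case $\vec x^j_v = \weight(e) + \vec x^i_{v'}$, so the one-edge play $v\xrightarrow{a}v'$ has length $1$ and weight $\weight(e) = \vec x^j_v - \vec x^i_{v'}$, as required.

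For the inductive step $m\geq 2$, I peel one optimal edge at $v$ and recurse. Choosing $e=(v,a,v^\star)$ realising the $\min$/$\max$ in $\vec x^j_v = \IteOpe(\vec x^{j-1})_v$ gives $\vec x^j_v = \weight(e) + \vec x^{j-1}_{v^\star}$. The crucial point, and the step I expect to be the main obstacle, is to legitimately apply the induction hypothesis to the pair $(i,j-1)$ at $v^\star$, which requires $\vec x^{j-1}\neq\vec x^i$. This is precisely where the hypothesis $\vec x^j\neq\vec x^i$ is spent: since $\IteOpe$ is monotone, the sequence $(\vec x^k)_k$ is non-increasing, so if we had $\vec x^{j-1}=\vec x^i$ then (as $i < j-1$ here) the whole segment $\vec x^i \geq \cdots \geq \vec x^{j-1}$ would collapse to a common value, making $\vec x^i$ a fixed point of $\IteOpe$ and forcing $\vec x^j = \IteOpe(\vec x^{j-1}) = \IteOpe(\vec x^i) = \vec x^i$, a contradiction. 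Hence $\vec x^{j-1}\neq\vec x^i$, and the induction hypothesis yields a vertex $v'$ and a play $\play'$ from $v^\star$ to $v'$ of length $m-1$ and weight $\vec x^{j-1}_{v^\star} - \vec x^i_{v'}$. Prepending $e$ produces a play from $v$ to $v'$ of length $m$ whose weight is $\weight(e) + \bigl(\vec x^{j-1}_{v^\star} - \vec x^i_{v'}\bigr) = \vec x^j_v - \vec x^i_{v'}$, closing the induction.

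Two loose ends remain, both of which I expect to be routine compared with the monotonicity argument. First, the peeled edge always leads to a finite-valued successor: since $\vec x^j_v$ is finite, a $\max$ (resp.\ $\min$) over outgoing edges at a $\MaxPl$ (resp.\ $\MinPl$) vertex cannot be realised through a successor of value $+\infty$, so $\vec x^{j-1}_{v^\star}$ is finite and the recursion stays in the finite regime. Second, a target vertex $v\in\VerticesT$ has $\vec x^k_v$ constant for $k\geq 1$ and is treated as absorbing, so the greedy construction terminates there; in the SCC where the lemma is applied the recursion thus traverses only non-target vertices until it possibly exits, which is exactly the behaviour needed by Proposition~\ref{prop:positive-scc}. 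I would spell out this finiteness and target bookkeeping explicitly, but the substance of the proof is the single-edge peeling together with the monotonicity step that keeps the induction hypothesis applicable.
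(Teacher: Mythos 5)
Your proof is correct and follows essentially the same route as the paper's: fix $i$, induct on $j$, peel off one edge realising the $\min$/$\max$ in the step from $\vec x^{j-1}$ to $\vec x^j$, and justify the applicability of the induction hypothesis by noting that $\vec x^{j-1}=\vec x^i$ would make $\vec x^i$ a fixed point (via monotonicity of $\IteOpe$) and hence force $\vec x^j=\vec x^i$. Your version merely spells out the monotonicity argument and the finiteness bookkeeping a little more explicitly than the paper does.
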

\begin{proof}
Let us fix $i$, and prove it by induction on $j>i$.

\textit{Initialisation} : If $j=i+1$, we applied one step of the value
iteration algorithm between $\vec x^i$ and $\vec x^j$, so for all
$\vertex$ there exists $\vertex'$ and an edge
$\edge=\vertex\rightarrow\vertex'$ such that $\weight(\edge)=\vec
x^{i+1}_\vertex-\vec x^i_{\vertex'}$.

\textit{Iteration} : We assume the property holds for $j-1>i$, and
$\vec x^j\neq \vec x^i$.  We applied one step of the value iteration
algorithm between $\vec x^{j-1}$ and $\vec x^j$, so for all \vertex
there exists $\vertex'$ and an edge $\edge=\vertex\rightarrow\vertex'$
such that $\weight(\edge)=\vec x^j_\vertex-\vec x^{j-1}_{\vertex'}$.
We apply the property on $i$ and $j-1$ ($\vec x^{j-1}\neq \vec x^i$
because $\vec x^j\neq \vec x^i$ and as soon as $\vec x$ stabilises,
the fixed point is reached and the iteration stops), and obtain that for
all $\vertex'\in\Vertices$ there exists $\vertex''$ and a play \play
from $\vertex'$ to $\vertex''$ with $|\play|=j-1-i$ and
$\weight(\play)=\vec x^{j-1}_{\vertex'}-\vec x^i_{\vertex''}$.  Then
we define
$\play'=\vertex\rightarrow\vertex'\xrightarrow{\play}\vertex''$ and it
holds that $|\play'|=j-i$ and
$\weight(\play')=\vec x^j_{\vertex}-\vec x^i_{\vertex''}$.\qed
\end{proof}

\begin{lemma}[timed setting notations]\label{lem:valite-timed}
  For all $i<j\in\N$, if $\vec x^j\neq \vec x^i$ then, for all
  configurations $(\state,\val)$, there exists $(\state',\val')$ and a
  play \play from $(\state,\val)$ to $(\state',\val')$ with
  $|\play|=j-i$ and
  $\weight(\play)=\vec x^j_{(\state,\val)}-\vec
  x^i_{(\state',\val')}$.
\end{lemma}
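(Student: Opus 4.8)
The plan is to replay the proof of Lemma~\ref{lem:valite} verbatim in the timed semantics, fixing $i$ and arguing by induction on $j>i$. The single structural novelty is that the finite $\min/\max$ of the untimed operator becomes the $\inf/\sup$ over the \emph{continuum} of timed edges $(\state,\val)\xrightarrow{d,\trans}(\state',\val')$; hence the whole argument rests on one new ingredient, namely that \emph{each application of $\IteOpe$ at a configuration realises its defining optimum by a genuine edge of $\game$}. Granting this, the base case $j=i+1$ is a single value-iteration step: for every non-target configuration $(\state,\val)$ it provides an edge $(\state,\val)\xrightarrow{d,\trans}(\state',\val')$ of weight $w=d\,\weight(\state)+\weight(\trans)$ with $\vec x^{i+1}_{(\state,\val)}=w+\vec x^{i}_{(\state',\val')}$, i.e.\ a play of length $1$ from $(\state,\val)$ to $(\state',\val')$ of weight $\vec x^{i+1}_{(\state,\val)}-\vec x^{i}_{(\state',\val')}$. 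Target configurations and the bookkeeping of $\pm\infty$ are handled exactly as in the untimed proof; in the intended application (Proposition~\ref{prop:VI-timed}) we work inside a single SCC all of whose values are finite, so the realising edge always lands on a finite-value configuration.

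For the inductive step I would copy the untimed concatenation. Assuming the claim for $j-1>i$ together with $\vec x^{j}\neq\vec x^{i}$, I first observe that $\vec x^{j-1}\neq\vec x^{i}$: the sequence $(\vec x^{k})_{k}$ is non-increasing and stabilises to a fixed point as soon as two consecutive iterates coincide, so $\vec x^{j-1}=\vec x^{i}$ would force $\vec x^{j}=\vec x^{i}$. One step of $\IteOpe$ between $\vec x^{j-1}$ and $\vec x^{j}$ then yields at $(\state,\val)$ a realising edge to some $(\state',\val')$ of weight $\vec x^{j}_{(\state,\val)}-\vec x^{j-1}_{(\state',\val')}$. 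Applying the induction hypothesis at $(\state',\val')$ to the indices $i<j-1$ gives a play from $(\state',\val')$ to some $(\state'',\val'')$ of length $j-1-i$ and weight $\vec x^{j-1}_{(\state',\val')}-\vec x^{i}_{(\state'',\val'')}$. Prepending the edge produces a play from $(\state,\val)$ to $(\state'',\val'')$ of length $j-i$ whose weight telescopes to $\vec x^{j}_{(\state,\val)}-\vec x^{i}_{(\state'',\val'')}$, which is the statement.

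The hard part is the one new ingredient: realising the per-step optimum by an actual edge. For a fixed $(\state,\val)$ and transition $\trans=(\state,g,Y,\state')$ the admissible delays form an interval, and by the piecewise affinity of the iterates with finitely many pieces \cite{ABM04} the map $d\mapsto d\,\weight(\state)+\weight(\trans)+\vec x^{k-1}_{(\state',(\val+d)[Y\leftarrow 0])}$ is piecewise affine on it; optimising over $d$ and over the finitely many transitions is therefore the extremum of such a function. Over the closed pieces carried by the regions of $\rgame$ this extremum is attained at a breakpoint, and the genuinely delicate case is that of open guard boundaries, where the naive optimal delay would sit just outside the admissible interval, so that the value may only be \emph{approached} inside $\game$. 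I expect to reconcile this by transporting the optimisation to the closed game $\overline\game$ and taking corner delays, exactly as already exploited for Lemma~\ref{lem:fog-exec}, choosing the optimal delay at a region endpoint; making sure the resulting witness is an actual play of $\game$ (and not merely of $\overline\game$) is precisely where I expect the real work to lie, and it is the only point that has no counterpart in the untimed proof.
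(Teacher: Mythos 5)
Your induction is exactly the paper's proof of Lemma~\ref{lem:valite-timed}: fix $i$, induct on $j$, extract one realising transition per application of $\IteOpe$, note that $\vec x^{j-1}\neq\vec x^i$ because the iteration stabilises once two consecutive iterates coincide, and telescope the weights. So on the core argument you match the paper.

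The one point where you go beyond it is the attainment of the $\inf/\sup$ by an actual edge of $\game$, and you should know that the paper does \emph{not} resolve this: its proof simply asserts that one application of $\IteOpe$ at $(\state,\val)$ yields a transition $(\state,\val)\xrightarrow{d,\trans}(\state',\val')$ of weight $\vec x^{j}_{(\state,\val)}-\vec x^{j-1}_{(\state',\val')}$, silently assuming the optimum is realised. Your worry is legitimate --- with strict guards the optimal delay can sit on an excluded region boundary, which is precisely why weighted timed games only admit $\varepsilon$-optimal strategies --- so the lemma as stated is slightly too strong, both in your write-up and in the paper's. The standard repair, which you should make explicit rather than deferring to ``real work'' in $\overline\game$, is to weaken the conclusion: at each of the $j-i$ steps pick an edge within $\varepsilon/(j-i)$ of the optimum, obtaining a play of length $j-i$ whose weight is within $\varepsilon$ of $\vec x^{j}_{(\state,\val)}-\vec x^{i}_{(\state',\val')}$ (or, keeping only the useful inequality, at least that difference minus $\varepsilon$ when the step is an $\inf$, at most it plus $\varepsilon$ when it is a $\sup$). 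This approximate version is all that Proposition~\ref{prop:VI-timed}-\ref{item:positive} needs, since the contradiction there is derived from an inequality with slack $2K$ against a strict bound, and an additive $\varepsilon<1$ does not disturb it. Trying instead to force exact attainment via corners of $\overline\game$ will not in general produce a witness that is a play of $\game$, so that route is a dead end; the $\varepsilon$-relaxation is the right fix.
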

\begin{proof}
Let us fix $i$, and prove it by induction on $j>i$.

\textit{Initialisation} : If $j=i+1$, we applied \IteOpe once between
$\vec x^i$ and $\vec x^j$, so for all configurations $(\state,\val)$
there exists $(\state',\val')$ and a transition
$(\state,\val)\xrightarrow{d,\trans}(\state',\val')$ of weight
$\vec x^j_{(\state,\val)}-\vec x^i_{(\state',\val')}$.

\textit{Iteration} : We assume the property holds for $j-1>i$, and
$\vec x^j\neq \vec x^i$.  We applied \IteOpe once between
$\vec x^{j-1}$ and $\vec x^j$, so for all configurations
$(\state,\val)$, there exists $(\state',\val')$ and a transition
$(\state,\val)\xrightarrow{d,\trans}(\state',\val')$ of weight
$\vec x^j_{(\state,\val)}-\vec x^i_{(\state',\val')}$.  We apply the
property on $i$ and $j-1$ ($\vec x^{j-1}\neq \vec x^i$ because
$\vec x^j\neq \vec x^i$ and as soon as $\vec x$ stabilises, the fixed
point is reached and the iteration stops), and obtain that for all
configurations $(\state',\val')$, there exists $(\state'',\val'')$ and
a play \play from $(\state',\val')$ to $(\state'',\val'')$ with
$|\play|=j-1-i$ and
$\weight(\play)=\vec x^{j-1}_{(\state',\val')}-\vec
x^i_{(\state'',\val'')})$.  Then we define
$\play'=(\state,\val) \xrightarrow{d,\trans}(\state',\val')
\xrightarrow{\play}(\state'',\val'')$ and it holds that $|\play'|=j-i$
and
$\weight(\play')=\vec x^j_{(\state,\val)}-\vec
x^i_{(\state'',\val'')}$.\qed
\end{proof}

\section{FOG and corner-point abstraction}\label{app:FOG}

If \val is a valuation and $\varepsilon>0$, $\Ball(\val,\varepsilon)$
denotes the open ball of radius $\varepsilon$ centered into $\val$ for
the infinity norm $\|.\|_\infty$ over $\Rplus^\Clocks$:
$\|\val-\val'\|_\infty = \max_{x\in\Clocks}|\val(x)-\val'(x)|$. We let
$W$ be the biggest weight appearing in \game in absolute value. If
\play and $\overline{\play}$ are plays in \game following respectively
a play \rpath and its copy in $\overline{\game}$, we denote by
$d(\play,\overline{\play})$ the distance between those two plays,
defined as the sum of the differences in absolute value between the
delays on the edges of \play and $\overline{\play}$. By triangular
inequality, we obtain
$|\weight(\play)-\weight(\overline{\play})|\leq W
d(\play,\overline{\play})$, since the same transitions are fired in
\play and $\overline{\play}$, with only different delays. We will now
relate an edge $\regv\xrightarrow{\overline{\play}}\regv'$ of a FOG
with an actual play $\play$ in the original timed game, while
controlling the distance between $\weight(\play)$ and
$\weight(\overline{\play})$:

\begin{lemma}
  Let \rpath be a cycle of \rgame, and
  $\regv\xrightarrow{\overline{\play}}\regv'$ be an edge of
  $\FOG(\rpath)$. For all $\varepsilon>0$ and
  $\val\in r\cap\Ball(\regv,\varepsilon)$, there exists
  $\val'\in r\cap\Ball(\regv',\varepsilon)$ and \play play in \game
  from \val to $\val'$ following \rpath such that
  $|\weight(\play)-\weight(\overline{\play})|\leq 2\varepsilon
  |\rpath| W$.
\end{lemma}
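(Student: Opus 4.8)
The plan is to follow the corner play $\overline{\play}$ edge by edge along the cycle $\rpath$, and to reproduce it in $\game$ by firing the same transitions with slightly perturbed delays, while keeping at each step the reached valuation within $\varepsilon$ of the corner visited by $\overline{\play}$. Write $\rpath=(\state_1,r)\xrightarrow{\trans_1}(\state_2,r_2)\xrightarrow{\trans_2}\cdots\xrightarrow{\trans_n}(\state_1,r)$ with $n=|\rpath|$ and $\trans_i=(\state_i,g_i,Y_i,\state_{i+1})$, and let $\overline{\play}$ visit the corners $\regv=c_0,c_1,\dots,c_n=\regv'$ using delays $d_1,\dots,d_n$, so that $c_{i+1}=(c_i+d_i)[Y_i\leftarrow 0]$ for each $i$. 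Recall that a play follows $\rpath$ as soon as, at each step $i$, the current valuation $\mu_i$ admits a delay $d_i'$ with $\mu_i+d_i'\models g_i$ and $(\mu_i+d_i')[Y_i\leftarrow 0]\in r_{i+1}$; we need not match the intermediate region in which $\overline{\play}$ checks its guard, which leaves us some freedom.

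I would prove by induction on $i$ that $\play$ can be built up to step $i$ so that the reached valuation satisfies $\mu_i\in r_i$ and $\|\mu_i-c_i\|_\infty<\varepsilon$; the base case $\mu_0=\val$ holds by hypothesis since $\val\in r\cap\Ball(\regv,\varepsilon)$, and the final valuation $\val'=\mu_n$ then lies in $r\cap\Ball(\regv',\varepsilon)$, as required. For the inductive step, write $\Delta_i=\mu_i-c_i$ and $\tau_i=d_i'-d_i$. After delaying by $d_i'$ and resetting $Y_i$, the offset from the next corner is $\Delta_{i+1}(x)=0$ for $x\in Y_i$ and $\Delta_{i+1}(x)=\Delta_i(x)+\tau_i$ otherwise. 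Since every $\Delta_i(x)$ lies in $(-\varepsilon,\varepsilon)$, the set of $\tau_i$ for which $\|\Delta_{i+1}\|_\infty<\varepsilon$ is a nonempty open interval containing $0$, on which $|\tau_i|<2\varepsilon$; this is where the factor $2$ originates.

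The delicate point is to pick $\tau_i$ inside that window while also making the step legal, that is $\mu_i+d_i'\models g_i$ and $(\mu_i+d_i')[Y_i\leftarrow 0]\in r_{i+1}$. I would argue this from the region structure: by time-abstract bisimilarity the valuations $\mu_i+t$ traverse exactly the time-successor sequence of $r_i$, and the corner $c_i+d_i$ is a vertex of the region reached after the reset by $\overline{\play}$; since every clock of $\mu_i$ differs from $c_i$ by less than $\varepsilon$, the delay that brings $\mu_i$ to a valuation satisfying $g_i$ and resetting into $r_{i+1}$ near $c_{i+1}$ differs from $d_i$ by less than $\varepsilon$, hence falls inside the admissible window. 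I expect this to be the main obstacle: a single scalar delay must simultaneously steer all clocks, whose offsets $\Delta_i(x)$ differ from one another, into the correct open region near the next corner. The argument rests on the fact that entry into that region is governed by one clock reaching its integer value, so that the induced displacement of the delay is bounded by the offset of that clock, i.e. by less than $\varepsilon$, which must then be reconciled with the invariant-preserving interval for $\tau_i$.

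Once the construction is carried out along the whole cycle, $\play$ and $\overline{\play}$ fire the very same transitions $\trans_1,\dots,\trans_n$ and differ only through their delays, so that $d(\play,\overline{\play})=\sum_{i=1}^n|d_i'-d_i|=\sum_{i=1}^n|\tau_i|<2\varepsilon n=2\varepsilon|\rpath|$. Plugging this into the triangular estimate $|\weight(\play)-\weight(\overline{\play})|\le W\,d(\play,\overline{\play})$ established just before the statement yields $|\weight(\play)-\weight(\overline{\play})|\le 2\varepsilon|\rpath|W$, which is exactly the claimed bound.
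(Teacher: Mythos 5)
Your overall strategy is exactly the paper's: simulate $\overline{\play}$ transition by transition, maintain the invariant that after step $i$ the reached valuation lies in $r_i$ and within $\varepsilon$ (infinity norm) of the corner $c_i$, bound each delay perturbation by $2\varepsilon$, and conclude via $|\weight(\play)-\weight(\overline{\play})|\leq W\,d(\play,\overline{\play})$. Your accounting of the offsets $\Delta_i$, the origin of the factor $2$, and the final summation are all correct.

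However, the one step you flag as "the main obstacle" is precisely the step that carries the mathematical content of the lemma, and you do not close it: you must exhibit a \emph{single} scalar $\tau_i$ that simultaneously (a) makes $\mu_i+d_i+\tau_i$ land in the correct time-successor region $r''$ (so the guard is satisfied and the reset yields $r_{i+1}$), and (b) satisfies $|\Delta_i(x)+\tau_i|<\varepsilon$ for every clock $x$. Your remark that the legal delay "differs from $d_i$ by less than $\varepsilon$" does not by itself place $\tau_i$ in the invariant-preserving window $\bigcap_x(-\varepsilon-\Delta_i(x),\,\varepsilon-\Delta_i(x))$: e.g.\ $\Delta_i(x)=0.9\varepsilon$ and $\tau_i=0.5\varepsilon$ breaks the invariant, and you explicitly leave the "reconciliation" open. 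The paper resolves (a) and (b) in one stroke rather than separately: since $\|(\mu_i+t)-(\regv+t)\|_\infty=\|\mu_i-\regv\|_\infty<\varepsilon$ for all $t$, the time-elapse line from $\mu_i$ runs parallel to, and uniformly within $\varepsilon$ of, the one from the corner $\regv$; because $r''$ is a time successor of $r_i$ and $\regv''=\regv+d_i\in\overline{r''}$, that line meets $r''\cap\Ball(\regv'',\varepsilon)$ at some $\val''=\mu_i+d_i'$. Membership in $\Ball(\regv'',\varepsilon)$ is exactly condition (b) for all clocks at once (and is preserved by the reset), membership in $r''$ is condition (a), and $|d_i-d_i'|\leq\|\regv-\mu_i\|_\infty+\|\regv''-\val''\|_\infty\leq 2\varepsilon$ by the triangle inequality. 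Supplying this geometric argument (or an equivalent one) is what your proof is missing; with it, the rest of your write-up goes through as stated.
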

\begin{proof}
  By the previous explanation, it is sufficient to find a play $\play$
  such that $d(\play,\overline{\play})\leq 2\varepsilon |\rpath|$. By
  induction, it is sufficient to prove a similar result only for a
  single edge
  $(\state,r)\xrightarrow{\trans=(\state,g,Y,\state')} (\state',r')$
  of the region automaton $\rgame$, between regions~$r$ and $r'$. We
  thus consider a play
  $(\state,\regv) \xrightarrow{d,\trans}(\state',\regv')$ in the
  closed timed game $\overline{\game}$ from a
  corner~$\regv\in\overline{r}$ to a corner~$\regv'\in\overline{r'}$.
  Consider a valuation $\val\in r\cap\Ball(\regv,\varepsilon)$. We now
  explain how to construct a valuation
  $\val'\in r'\cap\Ball(\regv',\varepsilon)$ and $d'\geq 0$ such that
  $\val\xrightarrow{d',\trans}\val'$ is a valid play in $\game$ and
  $|d-d'|\leq 2\varepsilon$, which implies the lemma.

  Let $r''$ be the time successor region of~$r$ such that
  $r''[Y\leftarrow 0]=r'$. We let $\regv''=\regv+d$ be the valuation
  of~$\overline{r''}$, just before the possible resets of the clocks
  of~$Y$ in~$\overline\game$: $\regv'=\regv''[Y\leftarrow 0]$. Then,
  the timed successors of $\val$, i.e.\ the affine line
  $\val+(1,1,\ldots,1)\R$, intersect the set
  $r''\cap\Ball(\regv'',\varepsilon)$ in a valuation $\val''$: indeed,
  lines obtained by time elapsing starting from \val and \regv are
  parallel, and $r''$ is a time successor of~$r$. There exists $d'$ such that
  $\val''=\val+d'$.  Moreover,
  $d'=\|\val-\val''\|_\infty\leq \|\val-\regv\|_\infty +
  \|\regv-\regv''\|_\infty + \|\regv''-\val''\|_\infty \leq 2
  \varepsilon + d$, and
  $d=\|\regv-\regv'\|_\infty\leq \|\regv-\val\|_\infty +
  \|\val-\val'\|_\infty + \|\val'-\regv'\|_\infty \leq 2 \varepsilon +
  d'$ so that $|d-d'|\leq 2\varepsilon$. Letting
  $\val'=\val''[Y\leftarrow 0]$, we have
  $\val\xrightarrow{d',\trans}\val'$ and
  $\val'\in r'\cap\Ball(\regv',\varepsilon)$. \qed
\end{proof}

This lemma is stronger than Lemma~\ref{lem:fog-exec}, and thus proves
it.

\section{Proofs of correctness and termination of the algorithm for
  divergent weighted timed games}

\begin{proof}[of Proposition~\ref{prop:VI-timed}-\ref{item:positive}]
  Let $W=\max_{a\in\Trans\cup\States} |\weight(a)|$ be the greatest
  weight in the game. There are no negative cycles in the SCC,
  therefore there are no configurations with value $-\infty$, and all
  values are finite. Let $K$ be a bound on the values
  $|\vec x^n_{(\state,\val)}|$ obtained after $n$ steps of the
  algorithm.\footnote{The value iteration emulates the attractor
    computation, so every value is finite after $n$ steps. Moreover,
    functions $(\state,\val)\mapsto \vec x^n_{(\state,\val)}$ are
    piecewise affine with a finite number of pieces over a compact
    space, allowing us to obtain this uniform bound $K$.}  Let us fix
  an integer $p>(2K+(n-1)^2(W+MW))n$.  We will show that the values
  obtained after $n+p$ steps are identical to those obtained after $n$
  steps only.  Therefore, since the algorithm computes non-increasing
  sequences of values, we have indeed stabilised after $n$ steps only.
  Let us assume the existence of a configuration $(\state,\val)$ such
  that $\vec x^{n+p}_{(\state,\val)}<\vec x^{n}_{(\state,\val)}$.  By
  induction on $p$, we can show (see Lemma~\ref{lem:valite-timed} in
  Appendix~\ref{app:technical} for a detailed proof) the existence of
  a configuration $(\state',\val')$ and a finite play \play from
  $(\state,\val)$ to $(\state',\val')$, with length $p$ and weight
  $\vec x^{n+p}_{(\state,\val)}-\vec x^{n}_{(\state',\val')}$: the
  play is composed of the delays and transitions that optimise
  successively the min/max operator in \IteOpe.  This finite play
  being of length greater than $(2K+(n-1)^2(W+MW))n$, if we associate
  each visited configuration $(\state,\val)$ to the state
  $(\state,[\val])$ of \rgame, there is at least one state of \rgame
  appearing more than $2K+(n-1)^2(W+MW)$ times. Thus, it can be
  decomposed into at least $2K+(n-1)^2(W+MW)$ plays following cycles
  of \rgame and at most $(n-1)$ finite plays $\play'_i$ visiting each
  state of \rgame at most once.
  All cycles of the SCC being positive, the weight of \play is at
  least $(2K+(n-1)^2(W+MW)) - (n-1)^2(W+MW)=2K$, bounding from below
  each $\play'_i$'s weight by $-(n-1)(W+MW)$.  Then,
  $\vec x^{n+p}_{(\state,\val)}-\vec x^{n}_{(\state',\val')} \geq 2K$,
  so
  $\vec x^{n+p}_{(\state,\val)} \geq 2K + \vec x^{n}_{(\state',\val')}
  \geq 2K - K \geq K$. But $K \geq \vec x^{n}_{(\state,\val)}$, so
  $\vec x^{n+p}_{(\state,\val)}\geq\vec x^{n}_{(\state,\val)}$, and
  that is a contradiction.\qed
\end{proof}

Much like in the untimed setting, negative SCCs can be resolved using
a dual method. First, we characterise the $-\infty$ values as regions
of \hgame where $\MaxPl$ can not unilaterally guarantee to reach the
targets.

\begin{proof}[of Proposition~\ref{prop:VI-timed}-\ref{item:minus-infinity}]
  Consider a state $(\state,r)$ of \hgame in the attractor for
  $\MaxPl$ to the targets. Then, if \MaxPl applies a winning
  memoryless strategy for the reachability objective to the target
  states, for all $\val\in r$, all strategies of \MinPl will generate
  a play from $(\state,\val)$ reaching a target after at most
  $|\hgame|$ steps. This implies that $(\state,\val)$ has a finite
  (lower) value in the game.

  Reciprocally, if $(\state,r)$ is not in the attractor, by
  determinacy of timed games with reachability objectives, for all
  $\val\in r$, \MinPl has a (memoryless) strategy $\minstrategy$ to
  ensure that no strategy of \MaxPl permits to reach a target state
  from $(\state,\val)$. Applying $\minstrategy$ long enough to
  generate a play following many negative cycles, before switching to
  a strategy allowing \MinPl to reach the target (such a strategy
  exists since no configuration has value $+\infty$ in the game),
  allows \MinPl to obtain from $(\state,\val)$ a negative weight as
  small as possible. Thus, $(\state,\val)$ has value~$-\infty$. \qed
\end{proof}

Thus, given a negative SCC, we can compute configurations of value
$-\infty$ in time polynomial in the SCC's size. Then, finite values
of other configurations can be computed by applying \IteOpe.

\begin{proof}[of Proposition~\ref{prop:VI-timed}-\ref{item:negative}]
  From a negative SCC $\hgame$ that has no more configuration of value
  $+\infty$ or $-\infty$, consider the dual (positive) SCC
  $\widetilde \game$ obtained by:
  \begin{inparaenum}[($i$)]
  \item switching states of $\MinPl$ and $\MaxPl$; 
  \item taking the opposite of every weight in states and transitions.
  \end{inparaenum}
  Sets of strategies of both players are exchanged in those two games,
  so that the upper value in $\game$ is equal to the opposite of the
  lower value in $\widetilde \game$, and vice versa. Since weighted
  games are determined, the value of $\game$ is the opposite of the
  value of $\widetilde\game$. Then, the value of $\game$ can be
  deduced from the value of $\widetilde\game$, for which
  Proposition~\ref{prop:VI-timed}-\ref{item:positive} applies.

  It is then immediate that the values computed with this computation
  of the smallest fixed point of $\IteOpe$ are exactly the opposite
  values of the ones computed in the dual positive SCC.\qed
\end{proof}

\section{\EXP-hardness of the value problem in divergent weighted
  timed games}\label{app:exp-hardness}

Let us show \EXP-hardness of the value problem on divergent weighted
timed games by reducing to reachability in a timed game
\cite{JurTri07}.  Consider an instance of timed game reachability with
a timed game $\mathcal A$ with target states and a configuration
$(\state,\val)$. Let us call \MinPl the player trying to enforce
reachability of a target state in $\mathcal A$ and \MaxPl the one
opposing it. We construct from $\mathcal A$ a weighted timed game
\game by considering the same game with all states weights at $0$ and
all transition weights at $1$. Notice that \game is divergent.  Let us
define $\alpha$ as an upper bound on the number of states in the
region automaton of $\mathcal A$, using classical bounds on the number
of regions: $\alpha=|\States|(4(M+1))^{|\Clocks|}|\Clocks|!$ ($M$ is
the upper bound on clock values).  We consider the instance of the
value problem defined with \game, the configuration $(\state,\val)$,
and the bound $\alpha$.  Then $\Val_\game(\state,\val)\leq \alpha$ if
and only if $(\state,\val)$ can reach the target vertices in
$\mathcal A$.

One direction of this statement's proof is direct by definition of
having a value smaller than $+\infty$, and the other comes from the
fact that reachability in $\mathcal A$ implies reachability in the
region game of $\mathcal A$ in less than $\alpha$ transitions, and
that implies reachability in \game with weight below $\alpha$ from the
configuration $(\state,\val)$ to a target state, therefore
$\Val_\game(\state,\val)\leq \alpha$.

Notice that this reduction is polynomial in the size of $\mathcal A$
because the bound $\alpha$ can be encoded and computed in binary.

\section{Deciding divergence for weighted timed
  games}\label{app:class-decision} 

Let us show how to decide if a game is \emph{not divergent}. By
Proposition~\ref{prop:timed-scc-sign}, it suffices to search for an
SCC of the region automaton containing a non-negative simple cycle
(i.e.\ there exists a play following it of weight in $(-1,+\infty)$)
and a non-positive one (i.e.\ there exists a play following it of
weight in $(-\infty,1)$). Since simple cycles have a length bounded by
$\alpha=|\States|\times |\regions\Clocks
M|=|\States|(4(M+1))^{|\Clocks|}|\Clocks|!$, we simply need to test
the existence of two cycles of length at most $\alpha$, in the same
SCC, one being non-negative and the other non-positive: notice that
this condition indeed implies the non-divergence by the same proof as
in Proposition~\ref{prop:timed-scc-sign}. Finally, notice that, by
Lemma~\ref{lem:fog-exec}, given a cycle of the region automaton, we
can decide if there exists a play following it with weight in
$(-\infty,1)$ (respectively, $(-1,+\infty)$), by guessing a play
following the corners in $\overline\game$ (game where strict
inequalities in guards are replaced with non-strict ones) and checking
that its accumulated weight is an integer in $(\infty,0]$
(respectively, $[0,+\infty)$). We test this condition in $\NPSPACE$,
by the same techniques as in the untimed setting: we guess a starting
region, and a starting corner, for both cycles, we check in polynomial
space that the regions are in the same SCC of $\rgame$, and we guess
on-the-fly the two cycles, i.e.\ the sequences of regions with one of
their corners, keeping in memory their accumulated weight, stopping
when we found two cycles of $\rgame$, or if their length becomes
larger than $\alpha$. Note the accumulated weights are integers
bounded (in absolute value) by
$\alpha\times \max_{a\in\Trans\cup\States} |\weight(a)|$, and can thus
be stored in polynomial space. This shows that deciding the divergence
is in $\co\NPSPACE=\NPSPACE=\PSPACE$ (using the theorems of
Immerman-Szelepcs\'enyi~\cite{Imm88,Sze88} and Savitch~\cite{Sav70}).

Let us now show the $\PSPACE$-hardness (indeed the $\co\PSPACE$, which
is identical) by a reduction from the reachability problem in a timed
automaton. As in the untimed setting, we consider a timed automaton
with a starting state and a different target state without outgoing
transitions. We construct from it a weighted timed game by
distributing all states to \MinPl, and equipping all transitions with
weight $1$, and all states with weight $0$. We also add a loop with
weight $-1$ on the target state, and a transition from the target
state to the initial state with weight $0$, both resetting all
transitions and bounded by a delay of 1. Then, the weighted timed game
is not divergent if and only if the target can be reached from the
initial state in the timed automaton.

\end{document}